\documentclass[journal]{IEEEtran}
\IEEEoverridecommandlockouts

\usepackage{amsmath,amssymb,amsfonts}
\usepackage{array}
\usepackage[caption=false,font=normalsize,labelfont=sf,textfont=sf]{subfig}
\usepackage{textcomp}
\usepackage{stfloats}
\usepackage{verbatim}
\usepackage{graphicx}
\usepackage{cite}
\usepackage{tabularx}
\usepackage{booktabs}
\usepackage{tikz}
\usetikzlibrary{arrows.meta,positioning,calc,backgrounds,shapes.geometric,decorations.markings}
\usepackage{pgfplots}
\pgfplotsset{compat=1.18}
\usepackage{algorithm}
\usepackage{algpseudocode}

\def\BibTeX{{\rm B\kern-.05em{\sc i\kern-.025em b}\kern-.08em
    T\kern-.1667em\lower.7ex\hbox{E}\kern-.125emX}}

\usepackage{acronym}
\acrodef{5G}{the fifth generation}
\acrodef{6G}{the sixth generation}
\acrodef{MIMO}{multiple-input multiple-output}
\acrodef{MISO}{multiple-input single-output}
\acrodef{MU}{multi-user}
\acrodef{EE}{energy efficiency}
\acrodef{SE}{spectral efficiency}
\acrodef{RF}{radio frequency}
\acrodef{BS}{base station}
\acrodef{UE}{user equipment}
\acrodef{DAC}{digital-to-analog converter}
\acrodef{PA}{power amplifier}
\acrodef{DSP}{digital signal processing}
\acrodef{SINR}{signal-to-interference-plus-noise ratio}
\acrodef{WMMSE}{weighted minimum mean square error}
\acrodef{AWGN}{additive white Gaussian noise}
\acrodef{MiLAC}{microwave linear analog computer}
\acrodef{MU-MISO}{multi-user multiple-input single-output}
\acrodef{RIS}{reconfigurable intelligent surface}
\acrodef{BD-RIS}{beyond-diagonal reconfigurable intelligent surface}
\acrodef{BD}{block diagonalization}
\acrodef{QR}{QR}\acused{QR}
\acrodef{SNR}{signal-to-noise ratio}
\acrodef{CDF}{cumulative distribution function}
\acrodef{BCD}{block coordinate descent}
\acrodef{LMI}{linear matrix inequality}
\acrodef{CSI}{channel state information}
\acrodef{ZF}{zero forcing}
\acrodef{PIN}{p-i-n}
\acrodef{5GNR}[5G~NR]{5G New Radio}
\acrodef{SVD}{singular value decomposition}
\acrodef{i.i.d.}{independent and identically distributed}
\acrodef{UPA}{uniform planar array}
\acrodef{MRT}{maximum ratio transmission}
\acrodef{PSD}{positive semidefinite}
\acrodef{MSE}{mean squared error}
\acrodef{MMSE}{minimum mean squared error}
\acrodef{CG}{conjugate gradient}
\acrodef{MO-AltMin}{manifold-optimisation alternating minimization}
\acrodef{EMF}{electromotive force}
\acrodef{IFT}{implicit function theorem}
\acrodef{KKT}[KKT]{Karush-Kuhn-Tucker}
\acrodef{PS}{phase shifter}
\acrodef{NLOS}{non-line-of-sight}
\acrodef{FC}{fully-connected}
\acrodef{SC}{stem-connected}
\acrodef{PHP}{post-hoc projection}
\acrodef{AR}{alternating refinement}
\acrodef{BSUM}{block successive upper-bound minimization}
\usepackage{amsthm}
\newtheoremstyle{zackplain}%
  {\topsep}{\topsep}{\itshape}{}{\bfseries}{:}{.5em}{}
\theoremstyle{zackplain}

\newtheorem{proposition}{\textbf{Proposition}}
\newtheorem{corollary}{\textbf{Corollary}}
\newtheorem{definition}{\textbf{Definition}}
\newtheorem{remark}{\textbf{Remark}}

\renewenvironment{proof}{\par\noindent\textit{\textbf{Proof:}}\ }{\hfill$\square$\par\medskip}
\usepackage[bottom=1.02in,top=0.75in,left=0.625in,right=0.625in]{geometry}
\usepackage{units}
\usepackage{bm}
\usepackage{amssymb}
\newcommand{\TT}{\mathsf{T}}
\newcommand{\HH}{\mathsf{H}}

\newcommand{\realp}[1]{\operatorname{Re}\!\left\{#1\right\}}

\newcommand{\bv}{{\bf b}}

\newcommand{\ev}{{\bf e}}
\newcommand{\fv}{{\bf f}}

\newcommand{\hv}{{\bf h}}

\newcommand{\pv}{{\bf p}}

\newcommand{\sv}{{\bf s}}

\newcommand{\uv}{{\bf u}}
\newcommand{\wv}{{\bf w}}
\newcommand{\vv}{{\bf v}}

\newcommand{\onev}{{\bf 1}}
\newcommand{\Am}{{\bf A}}
\newcommand{\Bm}{{\bf B}}

\newcommand{\Em}{{\bf E}}
\newcommand{\Fm}{{\bf F}}
\newcommand{\Gm}{{\bf G}}
\newcommand{\Hm}{{\bf H}}
\newcommand{\Id}{{\bf I}}

\newcommand{\Mm}{{\bf M}}
\newcommand{\Nm}{{\bf N}}

\newcommand{\Pm}{{\bf P}}

\newcommand{\Um}{{\bf U}}
\newcommand{\Wm}{{\bf W}}
\newcommand{\Vm}{{\bf V}}
\newcommand{\Xm}{{\bf X}}
\newcommand{\Ym}{{\bf Y}}
\newcommand{\Zm}{{\bf Z}}

\newcommand{\Zerom}{{\bf 0}}

\newcommand{\omegav}{\hbox{\boldmath$\omega$}}

\newcommand{\Sigmam}{\hbox{\boldmath$\Sigma$}}

\newcommand{\Thetam}{\hbox{\boldmath$\Theta$}}
\newcommand{\Omegam}{\hbox{\boldmath$\Omega$}}
\newcommand{\Xim}{\hbox{\boldmath$\Xi$}}

\usepackage{amsmath}
\DeclareMathOperator*{\argmax}{arg\,max}

\DeclareMathOperator{\tr}{tr}
\DeclareMathOperator{\diag}{diag}
\DeclareMathOperator{\rank}{rank}
\DeclareMathOperator{\St}{St}

\begin{document}

\title{Beamforming Design for Stem-Connected Microwave Linear Analog Computer (MiLAC)-Aided Multiuser MISO Downlinks}

\author{Yuchen Zhang, \emph{Member, IEEE}, Zheyu Wu, \emph{Member, IEEE}, \\Bruno Clerckx, \emph{Fellow, IEEE}, and Tareq Y. Al-Naffouri, \emph{Fellow, IEEE}
\thanks{Y. Zhang and T. Y. Al-Naffouri are with the Computer, Electrical, and Mathematical Science \& Engineering (CEMSE) Division, King Abdullah University of Science and Technology (KAUST), Thuwal 23955-6900, Kingdom of Saudi Arabia (e-mail: \{yuchen.zhang, tareq.alnaffouri\}@kaust.edu.sa).}
\thanks{Z. Wu and B. Clerckx are with the Department of Electrical and Electronic Engineering, Imperial College London, London, SW7 2AZ, U.K. (email: \{zheyu.wu, b.clerckx\}@imperial.ac.uk).}
}

\maketitle

\bstctlcite{IEEEexample:BSTcontrol}

%% ======================================================================
\begin{abstract}
A microwave linear analog computer (MiLAC) is a tunable microwave network that leverages wave propagation to perform computation directly in the analog domain at microwave frequencies. MiLAC finds promising applications in beamforming: the data streams propagate through a reconfigurable network with tunable admittances and emerge as the antenna signals. To avoid power dissipation and non-reciprocal components, MiLACs for communications are desirable to be lossless and reciprocal, which restricts the set of analog beamformers they can realize. Realizing this with a fully-connected MiLAC, however, requires a number of tunable admittances that grows quadratically with the antenna count, which is hardware-costly. To mitigate this, a recently proposed stem-connected MiLAC lowers this growth to linear, without sacrificing point-to-point capacity. Two questions have remained open: whether this design carries over from the single-user point-to-point link to the multiuser downlink, and whether it still works when each tunable component is restricted to the bounded, discrete values practical hardware allows. This paper answers both for the multiuser multiple-input single-output downlink. We show that a stem-connected MiLAC can realize all beamformers on the complex Stiefel manifold. We further prove that restricting the design to this Stiefel subset achieves the same sum-rate as the fully-connected MiLAC whenever the array is large, specifically $N\ge 2K-1$, where $N$ and $K$ denote the numbers of transmit antennas and users, respectively. To maximize the sum-rate, we develop a weighted minimum mean-square error solver with a Riemannian update on the Stiefel manifold, and, for the bounded and discrete hardware, a closed-form projection baseline together with an alternating refinement. Simulations show that the stem-connected MiLAC matches the sum-rate of the fully-connected MiLAC, comes close to the fully digital sum-rate upper bound without the symbol-rate digital processing that hybrid beamforming requires, and, with refinement, recovers most of the loss incurred by directly quantizing the susceptances to the hardware grid.
\end{abstract}

\begin{IEEEkeywords}
MiLAC, stem-connected architecture, multiuser MISO, Stiefel manifold, WMMSE, hardware constraints.
\end{IEEEkeywords}

%% ======================================================================
\section{Introduction}
\label{sec:intro}

Active \ac{RF} chains and high-resolution \acp{DAC} constitute a dominant per-antenna cost in the large arrays envisioned for fifth- and sixth-generation wireless networks~\cite{Bjornson2025Gigantic,Larsson2014Massive,Ning2026PMI}. A standard remedy is to break the one-to-one mapping between antennas and active chains through hybrid analog--digital beamforming: an analog beamformer of size $N\times N_{\text{RF}}$ is cascaded with a digital beamformer of size $N_{\text{RF}}\times K$, reducing the number of active chains from the antenna count $N$ to the smaller \ac{RF}-chain count $N_{\text{RF}}$, with $N_{\text{RF}}\ge K$ for $K$ data streams. A representative analog implementation is the \ac{PS} network, where each \ac{RF} chain is connected to every antenna through a tunable unit-modulus \ac{PS}~\cite{ElAyach2014Spatially,Sohrabi2016Hybrid,Yu2016AltMin}. This architecture is effective but subject to two distinct restrictions. First, since each \ac{PS} has unit modulus, every entry of the analog beamformer is constrained to have constant modulus. This limits spatial flexibility, and the digital baseband can only partially compensate for this limitation when $N_{\text{RF}}$ is small~\cite{Yu2016AltMin}. Second, the digital baseband combines data streams at the symbol rate, producing weighted symbol sums with an enlarged dynamic range, which is incompatible with low-resolution \acp{DAC}. Analog- and antenna-domain alternatives, including dynamic metasurface antennas~\cite{Shlezinger2021Dynamic,Ruizhi2026ICASSP}, reconfigurable/holographic antennas~\cite{Zhuoyang2026TSP,Wenyan2026ST}, and movable/fluid antennas~\cite{Alireza2026JSTSP,Ruizhi2026TWC,Yichi2025TVT}, can ease the first restriction by introducing additional spatial flexibility in the antenna or electromagnetic domain. However, they do so under their own hardware constraints and do not remove the second restriction caused by symbol-rate baseband processing.

A recently proposed line of work removes both restrictions at once. A \ac{MiLAC} is a tunable multiport microwave network whose ports are interconnected through reconfigurable admittances~\cite{Nerini2025AnalogI,Nerini2025AnalogII}. In a \ac{MiLAC}-aided transmitter, $K$ ports are connected to the \ac{RF} chains and the remaining $N$ ports to the antennas, where $N$ is the number of transmit antennas and $K$ the number of \ac{RF} chains, one per user, giving $N{+}K$ ports in total. The \ac{MiLAC} concept is independent of any specific admittance class, but for practical desirability the hardware is made \emph{lossless} (purely imaginary admittances) and \emph{reciprocal} (symmetric admittances), which renders its scattering matrix symmetric and unitary~\cite{Wu2026MiLAC,Nerini2025Capacity}. This choice avoids the signal-power dissipation caused by real admittance parts and the bulky non-reciprocal devices that asymmetric admittances would require. Under this constraint, the analog beamformer $\Fm\in\mathbb{C}^{N\times K}$ that maps the \ac{RF}-chain inputs to the antenna outputs is constrained only by a bound on its spectral norm~\cite{Wu2026MiLAC}, which gives the \ac{MiLAC} front end greater beamforming flexibility than the unit-modulus \ac{PS} analog beamformer.

Two axes have shaped the \ac{MiLAC} beamforming literature. The first axis is the digital-baseband choice. A hybrid digital-\ac{MiLAC} composition, in which the analog beamformer $\Fm$ is pre-multiplied by a $K\times K$ digital beamformer, is able to achieve any target digital beamformer~\cite{Wu2026MiLAC}, while the more economical pure-analog setting, which drops the baseband and directly sends the constellation symbols, with power allocation, to the \ac{RF} chains, is operationally attractive (no symbol-rate baseband processing and compatible with low-resolution \acp{DAC}) but cannot match the fully digital sum-rate in general. The sum-rate gap closes only as the channels to different users become asymptotically orthogonal in the large-antenna-array regime of \ac{MU-MISO} downlinks~\cite{Wu2026MiLAC},\cite{Fang2026Performance}. More recently, a two-layer \ac{MiLAC} was shown to attain the fully digital sum-rate purely in the analog domain, without any digital beamformer~\cite{Zhou2026TwoLayer}. 

The second axis is the interconnection topology. A \ac{FC} \ac{MiLAC} uses every available susceptance and requires $(N{+}K)(N{+}K{+}1)/2 = \mathcal{O}(N^2)$ tunable elements, which is prohibitive at large $N$. To reduce this circuit complexity, a novel \ac{SC} \ac{MiLAC} architecture has been recently proposed, which retains only $K(2N{+}1) = \mathcal{O}(NK)$ tunable elements and remains capacity-achieving in the point-to-point \ac{MIMO} setting~\cite{Nerini2026Stem}. Beyond these two axes, complementary efforts address channel estimation~\cite{ZhangEstimation}, physics-compliant mutual-coupling-aware modeling~\cite{NeriniPhysics}, energy-efficiency optimization~\cite{Zack2026EE}, and hardware realizations using hybrid couplers and \acp{PS} for analog computing purposes~\cite{NeriniHardware}.

These two axes meet at an unexplored design point: an \ac{SC}-\ac{MiLAC} serving multiple users. One form is straightforward. With a small digital beamformer before the \ac{MiLAC} (the hybrid form), the target digital beamformer can be factorized~\cite{Wu2026MiLAC} into an analog part exactly realizable by an \ac{SC}-\ac{MiLAC}~\cite{Nerini2026Stem} and a residual digital part that absorbs the remaining degrees of freedom, as will become clear in Proposition~\ref{prop:wu}. Together, these parts reproduce the target digital beamformer, so the hybrid form attains the fully digital sum-rate by post-hoc factorization. The pure-analog \ac{SC}-\ac{MiLAC} design, which removes the digital beamformer, remains open: neither its realizable beamformer set nor its achievable sum-rate is known. The \ac{FC}-\ac{MiLAC} algorithm in~\cite{Wu2026MiLAC} does not carry over straightforwardly, since its updates only cap the spectral norm of the analog beamformer underpinned by \ac{FC}-\ac{MiLAC}, which may not be realizable by the sparser \ac{MiLAC} with \ac{SC} topology. A second gap concerns both architectures. Existing \ac{MiLAC} designs allow each susceptance to take any real value, whereas practical varactors or PIN diodes provide only a finite range of quantized values~\cite[Sec.~VIII]{Nerini2026Stem}. Prior work has evaluated this effect only numerically~\cite[Fig.~10]{Nerini2025AnalogII}. Discrete-phase \ac{RIS} methods~\cite{Wu2020DiscreteRIS,Di2020HybridRIS,Yang2021DiscreteRIS,Abeywickrama2020Practical} are not directly applicable either: in an \ac{RIS}, each element controls one beamformer entry, whereas in a \ac{MiLAC}, each susceptance acts through a matrix inversion and reshapes the entire beamformer.

\subsection{Contributions}

Motivated by these gaps, this paper addresses both open issues for \ac{SC}-\ac{MiLAC}-aided \ac{MU-MISO} downlink beamforming. The main contributions are as follows.

\begin{itemize}
\item \textbf{Hybrid digital-\ac{SC}-\ac{MiLAC} achievability:} We show that appending a small digital beamformer to an \ac{SC}-\ac{MiLAC} attains the fully digital sum-rate, by factorizing the digital precoder into a stem-realizable analog part and a residual digital part (Corollary~\ref{cor:hybrid}), which settles the hybrid case and motivates the more economical pure-analog design.
\item \textbf{\ac{SC}-\ac{MiLAC} beamformer feasible-set characterization:}
We identify the complex Stiefel manifold $\St(N,K)$ as a closed-form-realizable subset of the pure-analog \ac{SC}-\ac{MiLAC} feasible set (Remark~\ref{rem:stiefel_suff}), and prove that this Stiefel restriction \emph{coincides with} the \ac{FC}-\ac{MiLAC} counterpart of~\cite{Wu2026MiLAC} whenever $N\ge 2K-1$, covering the large-antenna-array regime that motivates \ac{MiLAC} in practice.
\item \textbf{Beamforming optimization for Stiefel-restricted sum-rate maximization:}
For sum-rate maximization with $\Fm$ restricted to $\St(N,K)$, we embed the Stiefel constraint into the \ac{WMMSE} framework by replacing the unconstrained analog-beamformer update with a Riemannian \ac{CG} step using a \ac{QR} retraction, Polak--Ribi\`ere conjugacy, and Armijo--Wolfe line search. 
\item \textbf{Beamforming optimization under hardware-compliant constraints:}
For analog-beamformer synthesis with each tunable susceptance constrained to a bounded discrete hardware grid, we introduce a two-parameter physics-compliant susceptance model with a dynamic range bound and finite bit resolution, characterize the induced realizable-beamformer set, and propose two architecture-agnostic algorithms: a closed-form \ac{PHP} baseline and a Sherman--Morrison-aided \ac{AR} method that updates one susceptance at a time and monotonically improves the sum-rate.
\end{itemize}

Simulation results show that, in the ideal large-array regime, the proposed \ac{SC}-\ac{MiLAC}-aided architecture matches the \ac{FC}-\ac{MiLAC} sum-rate, approaches the fully digital sum-rate upper bound while removing the symbol-rate baseband digital beamformer required by conventional hybrid beamforming, and, under bounded discrete hardware grids, suffers a substantial \ac{PHP} quantization loss that the \ac{AR} algorithm largely recovers with moderate dynamic range and bit resolution.

\subsection{Paper Organization and Notation}

The rest of the paper is organized as follows. Section~\ref{sec:model} introduces the \ac{MU-MISO} signal model and the \ac{MiLAC} architecture. Section~\ref{sec:ideal} develops the ideal-case theory, including the Stiefel-manifold characterization, the associated phase diagram, and the Stiefel-restricted \ac{WMMSE} solver. Section~\ref{sec:synthesis} introduces the bounded-and-discrete susceptance model and the two synthesis algorithms it admits. Section~\ref{sec:sim} reports numerical results. Section~\ref{sec:conc} concludes.

\emph{Notation:} Scalars, vectors, and matrices are written in lowercase, bold lowercase, and bold uppercase, respectively. Transpose, conjugate transpose, and inverse are denoted $(\cdot)^\TT$, $(\cdot)^\HH$, and $(\cdot)^{-1}$. The trace, rank, range (column space), null space, spectral norm, and Frobenius norm are written $\tr(\cdot)$, $\rank(\cdot)$, $\mathrm{Range}(\cdot)$, $\mathrm{Null}(\cdot)$, $\|\cdot\|_2$, and $\|\cdot\|_F$. The real part of a complex number is $\realp{\cdot}$. The $K\times K$ identity matrix, zero matrix, and length-$K$ all-ones vector are $\Id_K$, $\Zerom$, and $\onev_K$. The complex circularly symmetric Gaussian with mean $\boldsymbol{\mu}$ and covariance $\boldsymbol{\Sigma}$ is denoted $\mathcal{CN}(\boldsymbol{\mu},\boldsymbol{\Sigma})$, and the uniform distribution on $[a,b]$ is $\mathrm{Unif}(a,b)$.

%% ======================================================================
\section{System Model}
\label{sec:model}

\subsection{MU-MISO Signal Model}

Consider a downlink \ac{MU-MISO} system in which an $N$-antenna \ac{BS} with $K\le N$ \ac{RF} chains serves $K$ single-antenna users. The data symbol vector $\sv\in\mathbb{C}^K$ satisfies $\mathbb{E}\{\sv\sv^\HH\}=\Id_K$, and a beamforming matrix $\Wm=[\wv_1,\ldots,\wv_K]\in\mathbb{C}^{N\times K}$ is applied, so the signal received at user $k$ is
\begin{equation}\label{eq:rx}
    r_k = \hv_k^\HH\wv_k s_k + \sum_{j\neq k}\hv_k^\HH\wv_j s_j + n_k,
\end{equation}
where $\hv_k$ is the channel from the \ac{BS} to user $k$ with $n_k\sim\mathcal{CN}(0,\sigma^2)$ being the \ac{AWGN}. The \ac{SINR} and sum-rate are
\begin{equation}\label{eq:rate}
    \Gamma_k = \frac{|\hv_k^\HH\wv_k|^2}{\sum_{j\neq k}|\hv_k^\HH\wv_j|^2+\sigma^2},\
    \mathcal{R}(\Wm) = \sum_{k=1}^K\log_2(1+\Gamma_k).
\end{equation}
The fully digital sum-rate maximization under total power budget $P_T$ is
\begin{equation}\label{eq:Pdig}
    \Wm^\star_{\mathrm{dig}} = \argmax_{\Wm:\tr(\Wm\Wm^\HH)\le P_T}\,\mathcal{R}(\Wm).
\end{equation}
Problem~\eqref{eq:Pdig} is non-convex but admits efficient stationary-point algorithms such as \ac{WMMSE}~\cite{Shi2011WMMSE}.

Fig.~\ref{fig:sys} depicts the signal flow with a \ac{MiLAC}: the baseband performs only per-stream power allocation before the $K$ \ac{RF} chains, so the overall beamformer factorizes as
\begin{equation}\label{eq:WfromF}
    \Wm = \Fm\diag(\sqrt{\pv}),\quad \Fm\in\mathbb{C}^{N\times K},\quad \pv\in\mathbb{R}_+^K,
\end{equation}
with $\Fm$ the analog beamformer realized by the lossless reciprocal \ac{MiLAC} and $\pv$ the per-stream powers constrained by $\onev_K^\TT\pv\le P_T$. This is the pure-analog setting of~\cite[Sec.~III]{Wu2026MiLAC}, which we call \emph{\ac{MiLAC}-aided beamforming}, as opposed to \emph{hybrid digital-\ac{MiLAC} beamforming}, which inserts a non-diagonal digital beamformer. The rationale behind this choice is justified later in the discussion following Corollary~\ref{cor:hybrid}. 

\begin{figure}[t]
\centering
\resizebox{\linewidth}{!}{%
\begin{tikzpicture}[
    >={Latex[length=1.6mm,width=1.3mm]},
    font=\footnotesize,
    box/.style={draw, rounded corners=1.5pt, align=center, inner sep=2pt},
    datab/.style={box, fill=blue!8, minimum width=11mm, minimum height=5mm},
    rfb/.style={box, fill=violet!14, minimum width=8.5mm, minimum height=5mm},
    milacb/.style={box, fill=red!10, minimum width=18mm, minimum height=20mm, font=\small},
    ant/.style={regular polygon, regular polygon sides=3, draw, fill=orange!55, inner sep=0pt, minimum size=3.4mm},
    hb/.style={box, fill=orange!12, minimum width=7mm, minimum height=13mm, font=\small},
    user/.style={box, fill=teal!14, minimum width=10.5mm, minimum height=5mm},
    addn/.style={draw, circle, inner sep=0.5pt, minimum size=3mm, font=\scriptsize}
]
% Data stream nodes
\node[datab] (s1) at (0, 0.85) {$\sqrt{p_1}s_1$};
\node[datab] (sK) at (0, -0.85) {$\sqrt{p_K}s_K$};
\node[font=\scriptsize, gray] at (0, 0) {$\vdots$};
% RF chain nodes
\node[rfb] (rf1) at (1.5, 0.85) {RF\,$1$};
\node[rfb] (rfK) at (1.5, -0.85) {RF\,$K$};
\node[font=\scriptsize, gray] at (1.5, 0) {$\vdots$};
% MiLAC
\node[milacb] (milac) at (3.25, 0) {\ac{MiLAC}\\[2pt]$\Fm\in\mathbb{C}^{N\times K}$};
% Antennas
\node[ant] (a1) at (4.7, 0.7) {};
\node[ant] (aN) at (4.7, -0.7) {};
\node[font=\scriptsize, gray] at (4.7, 0) {$\vdots$};
% Channel
\node[hb] (H) at (5.7, 0) {$\Hm$};
% Receiver adders + noise labels
\node[addn] (n1) at (6.7, 0.7) {$+$};
\node[addn] (nK) at (6.7, -0.7) {$+$};
\node[font=\scriptsize, gray] (n1lab) at (6.7, 1.3) {$n_1$};
\node[font=\scriptsize, gray] (nKlab) at (6.7, -1.3) {$n_K$};
\draw[->, gray] (n1lab) -- (n1);
\draw[->, gray] (nKlab) -- (nK);
% Users
\node[user] (u1) at (7.85, 0.7) {User\,$1$};
\node[user] (uK) at (7.85, -0.7) {User\,$K$};
\node[font=\scriptsize, gray] at (7.85, 0) {$\vdots$};
% arrows
\draw[->] (s1) -- (rf1);
\draw[->] (sK) -- (rfK);
\draw[->] (rf1.east) -- (milac.west |- rf1);
\draw[->] (rfK.east) -- (milac.west |- rfK);
\draw[->] (milac.east |- a1) -- (a1.west);
\draw[->] (milac.east |- aN) -- (aN.west);
\draw[->, decorate, decoration={snake,amplitude=.35mm,segment length=1.8mm,post length=0.7mm}] (a1.east) -- (H.north west);
\draw[->, decorate, decoration={snake,amplitude=.35mm,segment length=1.8mm,post length=0.7mm}] (aN.east) -- (H.south west);
\draw[->] (H.north east) -- (n1.west);
\draw[->] (H.south east) -- (nK.west);
\draw[->] (n1.east) -- (u1.west);
\draw[->] (nK.east) -- (uK.west);
\end{tikzpicture}%
}
\caption{\ac{MiLAC}-aided beamforming for \ac{MU-MISO} downlink.}
\label{fig:sys}
\end{figure}

We consider a $(N{+}K)$-port lossless and reciprocal \ac{MiLAC}, which is described by a multiport reconfigurable network with scattering matrix $\Thetam\in\mathbb{C}^{(N+K)\times(N+K)}$ satisfying $\Thetam^\HH\Thetam=\Id_{N+K}$ (lossless) and $\Thetam^\TT=\Thetam$ (reciprocal)~\cite{Nerini2025AnalogI}. Partitioning $\Thetam$ into the $K$ input ports (each connected to an \ac{RF} chain) and the $N$ output ports (each connecting to an antenna), the effective analog beamformer is the off-diagonal block~\cite{Wu2026MiLAC}
\begin{equation}\label{eq:F_block}
    \Fm\triangleq[\Thetam]_{K+1:K+N,\,1:K}\in\mathbb{C}^{N\times K},\qquad \|\Fm\|_2\le 1,
\end{equation}
where the spectral bound follows from the unitarity of $\Thetam$.\footnote{Following the notational convention of~\cite[eq.~(6)]{Wu2026MiLAC}, $\Fm$ in~\eqref{eq:F_block} equals $2 \Fm_{\mathrm{MiLAC}}$, where the physical analog beamformer is $\Fm_{\mathrm{MiLAC}} = (1/2)[\Thetam]_{K+1:K+N,\,1:K}$ per~\cite[eq.~(31)]{Nerini2025Capacity}. The factor of two is absorbed into the per-stream power scale, so the transmit-power constraint reads $\tr(\Wm\Wm^\HH)\le P_T$ in our notation.} The scattering matrix is generated by a real symmetric susceptance matrix $\Bm\in\mathbb{R}^{(N+K)\times(N+K)}$ through the Cayley relation
\begin{equation}\label{eq:cayley}
\Thetam=(Y_0\Id-j\Bm)(Y_0\Id+j\Bm)^{-1},
\end{equation}
where $Y_0=1/Z_0$ is the reference admittance with $Z_0$ the port reference impedance~\cite{Nerini2025AnalogI}.

\begin{figure}[t]
\centering
\begin{tikzpicture}[
    font=\scriptsize,
    rf/.style={draw, fill=blue!18, rectangle, minimum size=4mm, inner sep=0pt, font=\scriptsize},
    rfc/.style={draw, line width=0.7pt, fill=blue!28, rectangle, minimum size=4mm, inner sep=0pt, font=\scriptsize},
    ant/.style={draw, fill=orange!50, regular polygon, regular polygon sides=3, inner sep=0pt, minimum size=3.2mm},
    antc/.style={draw, line width=0.7pt, fill=orange!85!yellow, regular polygon, regular polygon sides=3, inner sep=0pt, minimum size=3.5mm},
    susc/.style={red!70!black, line width=0.5pt},
    cgframe/.style={densely dotted, gray!75!black, line width=0.5pt, rounded corners=2pt}
]
% (a) \ac{FC} panel: every pair of the N+K=6 vertices connected
\begin{scope}[xshift=0cm]
\node[rf] (f1) at (0,   1.1) {};
\node[rf] (f2) at (1.8, 1.1) {};
\foreach \i/\x in {1/-0.15, 2/0.55, 3/1.25, 4/1.95} {
  \node[ant] (fa\i) at (\x, -0.1) {};
}
% RF-RF
\draw[susc] (f1) to[bend left=28] (f2);
% antenna-antenna (all C(4,2) = 6 edges)
\draw[susc] (fa1) -- (fa2);
\draw[susc] (fa2) -- (fa3);
\draw[susc] (fa3) -- (fa4);
\draw[susc] (fa1) to[bend right=20] (fa3);
\draw[susc] (fa2) to[bend right=20] (fa4);
\draw[susc] (fa1) to[bend right=28] (fa4);
% bipartite RF-antenna (2 x 4 = 8 edges)
\foreach \u in {f1,f2} {
  \foreach \v in {fa1,fa2,fa3,fa4} {
    \draw[susc] (\u) -- (\v);
  }
}
\node[font=\scriptsize] at (0.9, -0.7) {(a)};
\end{scope}
% (b) \ac{SC} panel: Q = 2K-1 = 3 centers = K=2 RF + (K-1)=1 promoted antenna; N-K+1 = 3 non-central antennas
\begin{scope}[xshift=4.6cm]
% top row: K=2 RF ports and K-1=1 promoted antenna forming the center clique
\node[rfc]  (s1) at (0,    1.1) {};
\node[antc] (sc) at (0.9,  1.1) {};
\node[rfc]  (s2) at (1.8,  1.1) {};
% bottom row: N-K+1 = 3 non-central antennas
\node[ant] (sa1) at (0.15, -0.1) {};
\node[ant] (sa2) at (0.9,  -0.1) {};
\node[ant] (sa3) at (1.65, -0.1) {};
% center clique (3 inter-port edges)
\draw[susc] (s1) -- (sc);
\draw[susc] (sc) -- (s2);
\draw[susc] (s1) to[bend left=38] (s2);
% bipartite center-to-non-central (3 x 3 = 9 inter-port edges)
\foreach \a in {sa1,sa2,sa3} {
  \foreach \c in {s1,sc,s2} {
    \draw[susc] (\a) -- (\c);
  }
}
% Dotted frame around the central graph (the Q=2K-1 center vertices)
\draw[cgframe] (-0.32, 0.83) rectangle (2.12, 1.62);
\node[font=\tiny, gray!75!black, anchor=south] at (0.9, 1.55) {central graph};
\node[font=\scriptsize] at (0.9, -0.7) {(b)};
\end{scope}
% Legend: blue rectangle = port connected to RF chain, orange triangle = port connected to antenna
\node[draw, fill=blue!18, rectangle, minimum size=2.5mm, inner sep=0pt] (legrf) at (0, 2) {};
\node[font=\scriptsize, anchor=west] at (0.2, 2) {port connected to \ac{RF} chain};
\node[regular polygon, regular polygon sides=3, draw, fill=orange!50, inner sep=0pt, minimum size=2.5mm] (legant) at (3.65, 2) {};
\node[font=\scriptsize, anchor=west] at (3.85, 2) {port connected to antenna};
\end{tikzpicture}
\vspace{-2mm}
\caption{\ac{MiLAC} architectures: (a) \ac{FC}-\ac{MiLAC}, (b) \ac{SC}-\ac{MiLAC}.}
\label{fig:topology}
\end{figure}

\subsection{MiLAC Architecture}
\label{sec:milac_arch}
We view a \ac{MiLAC} as an $(N{+}K)$-vertex graph in which each edge is a tunable component. Depending on the graph topology, we introduce below two representative architectures with distinct tunable-component counts.

\textbf{\ac{FC}-\ac{MiLAC}:} The graph is the interconnected complete graph $\mathcal{K}_{N+K}$ and every upper-triangular entry of $\Bm$ (due to reciprocity) is tunable, yielding tunable-component count as
\begin{equation}\label{eq:Nfull}
    N_C^{\mathrm{full}} = (N{+}K)(N{+}K{+}1)/2.
\end{equation}

\textbf{\ac{SC}-\ac{MiLAC}:}  The graph is a \emph{center graph} with $Q=2K{-}1$ central vertices each connected to every other vertex, while non-central vertices connect only to the central ones~\cite{Nerini2026Stem}. The $K$ input ports, one per \ac{RF} chain, sit among the $Q$ centers, so the remaining $Q{-}K=K{-}1$ centers are output ports feeding $K-1$ antennas, while the other $N{-}K{+}1$ output ports, which feed the remaining antennas, are non-central. The tunable-component count collapses to
\begin{equation}\label{eq:Nstem}
    N_C^{\mathrm{stem}} = K(2N+1),
\end{equation}
linear in $N$ for fixed $K$. 

Fig.~\ref{fig:topology} contrasts the two topologies on the small example $(N,K)=(4,2)$. Squares mark ports connected to \ac{RF} chains (input ports) and triangles mark ports connected to antennas (output ports). In the \ac{SC} panel, the heavier-outlined squares and the darker triangle are the $Q=2K{-}1=3$ central vertices ($K$ \ac{RF}-connected ports plus $K{-}1$ antenna-connected ports) of the center graph, which form a clique among themselves and connect to each of the $N{-}K{+}1=3$ non-central antenna-connected ports (lighter triangles). Here, inter-non-central susceptances are absent. In the \ac{FC} panel, every pair of vertices is joined by a tunable susceptance, recovering the complete graph on $N{+}K$ vertices. 
The \ac{SC} architecture thus trades circuit density for a linear-in-$N$ component budget.

The physically tunable susceptances of a MiLAC are the inter-port admittances $B_{i,k}$ ($i\ne k$, present whenever the interconnection graph has an edge between ports $i$ and $k$) and the port-to-ground admittances $B_{k,k}$, one per port. Lossless operation makes them real, with the tunable admittances given by $Y_{i,k}=jB_{i,k}$. We adopt the convention $B_{i,k}=0$ for any inter-port pair absent from the graph (considered open between those ports). The susceptance-matrix entries $[\Bm]_{i,k}$ are derived linear combinations of these physical components~\cite[eq.~(6)]{Nerini2025AnalogI},
\begin{equation}\label{eq:Bmat_from_phys}
[\Bm]_{i,k} =
\begin{cases}
\displaystyle -B_{i,k}, & i\ne k,\\[2pt]
\displaystyle \sum_{p=1}^{N+K} B_{p,k}, & i=k,
\end{cases}
\end{equation}
and the inversion~\cite[eq.~(7)]{Nerini2025AnalogI} recovers each physical $B_{i,k}$ from $\Bm$.  The entry $[\Bm]_{k,k}$ aggregates the port-to-ground component at port $k$ together with all inter-port components incident to $k$, so it is \emph{not} a directly tunable knob, the tunable knobs being the $\{B_{i,k}\}$ themselves.  We write $\bv\in\mathbb{R}^{N_C}$ for the vector of these physically tunable components in a fixed order, with $N_C\in\{N_C^{\mathrm{stem}},N_C^{\mathrm{full}}\}$.  The effective analog beamformer is then $\Fm=\Phi(\bv)\triangleq[\Thetam(\bv)]_{K+1:K+N,\,1:K}$ with $\Thetam(\bv)=(Y_0\Id-j\Bm(\bv))(Y_0\Id+j\Bm(\bv))^{-1}$ assembled from $\bv$ via~\eqref{eq:Bmat_from_phys}, and $\Phi:\mathbb{R}^{N_C}\to\{\Fm:\|\Fm\|_2\le 1\}$.  Bounding and quantizing $\bv$ entrywise (Section~\ref{sec:hw_model}) is then physically meaningful: each $\bv$ entry corresponds to one component (e.g., a varactor) with its own dynamic range and bias-voltage resolution.

\subsection{Two Base Results}

We restate two results that underpin our technical developments.

\begin{proposition}[Stiefel realizability via \ac{SC}-\ac{MiLAC}, {\cite[Prop.~1 and Sec.~V]{Nerini2026Stem}}]
\label{prop:stem}
For a \ac{SC}-\ac{MiLAC} with center size $Q=2K{-}1$ whose central vertices include the $K$ \ac{RF}-chain ports, and for any $\Fm^\star\in \St(N,K)$, where
\begin{equation}
\St(N,K)\triangleq\{\Fm\in\mathbb{C}^{N\times K}:\Fm^\HH\Fm=\Id_K\},	
\end{equation}
there exists $\bv^\star\in\mathbb{R}^{N_C^{\mathrm{stem}}}$ with $\Phi(\bv^\star)=\Fm^\star$. A closed-form synthesis, which we denote $\bv^\star=\Psi(\Fm^\star)$, returns such a $\bv^\star$ at cost $\mathcal{O}(NK^2)$ by~\cite[Alg.~1]{Nerini2026Stem}. Consequently $\St(N,K)\subseteq\Phi(\mathbb{R}^{N_C^{\mathrm{stem}}})$.
\end{proposition}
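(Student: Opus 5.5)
To prove Proposition~\ref{prop:stem}, I would work with the susceptance matrix $\Bm$ rather than with $\Thetam$. The Cayley relation~\eqref{eq:cayley} is a bijection between real symmetric $\Bm$ and symmetric unitary $\Thetam$ for which $Y_0\Id+\Thetam$ is invertible. It therefore suffices to (i) embed $\Fm^\star$ into a symmetric unitary $\Thetam^\star$ whose lower-left block is $\Fm^\star$, (ii) pull it back to $\Bm^\star = jY_0(\Thetam^\star-\Id)(\Thetam^\star+\Id)^{-1}$, and (iii) show that, with a suitable choice of the free blocks, $\Bm^\star$ has zeros in every entry linking two non-central vertices. The physical $\bv^\star$ is then read off through the inversion of~\eqref{eq:Bmat_from_phys}, which is linear and respects the zero pattern.

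\textbf{Step (i): embedding.} Since $\Fm^{\star\HH}\Fm^\star=\Id_K$, the embedding must have $[\Thetam]_{1:K,1:K}=\Zerom$. Otherwise the first $K$ columns of $\Thetam$ would have norm exceeding one. I would take the form
\[
\Thetam^\star=\begin{bmatrix}\Zerom & \Fm^{\star\TT}\\ \Fm^\star & \Xm\end{bmatrix},
\]
with $\Xm$ complex symmetric, $\Xm\bar{\Fm}^\star=\Zerom$, and $\Xm$ unitary on the orthogonal complement of the column space of $\bar{\Fm}^\star$.

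A concrete choice is $\Xm=\Um\Um^\TT$ times a symmetric unitary, where $\Um$ is an orthonormal basis of $\mathrm{Null}(\Fm^{\star\TT})$. This leaves a free symmetric unitary $(N-K)\times(N-K)$ parameter $\Sm$, with $\Xm=\Um\Sm\Um^\TT$.

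\textbf{Steps (ii)--(iii): sparsity.} Write $\Fm^\star=\begin{bmatrix}\Fm_c\\ \Fm_n\end{bmatrix}$, splitting the rows into the $K-1$ central antennas and the $N-K+1$ non-central ones. I would use the standard block formula for the Cayley inverse, $\Bm = jY_0(\Thetam-\Id)(\Thetam+\Id)^{-1}$, and compute the non-central/non-central block through a Schur complement. The requirement that this block vanish becomes a set of equations in $\Sm$. Equivalently, it asks that the restriction of $\Thetam^\star$ to the non-central ports equal $-\Id$, up to a rank correction supported on the $Q=2K-1$ central ports.

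Counting dimensions suggests why $Q=2K-1$ is exactly enough. The $K$ inputs plus $K-1$ extra central outputs give a center large enough to absorb the rank-$(2K-1)$ perturbation $\Thetam^\star+\Id$, after choosing $\Sm$ so that $\Xm+\Id_N$ differs from $\Id$ only through a low-rank term built from $\Fm^\star$. The candidate choice is $\Xm=-(\Id_N-\Fm^\star\Fm^{\star\TT\!*})$ adjusted to be symmetric, namely $\Xm=-\Id+\bar{\Fm}^\star$-related terms. Then $\Thetam^\star+\Id$ has rank at most $2K$. Using the freedom in $\Sm$ and the row structure $\Fm_c$, one further rank can be cancelled, leaving a rank-$(2K-1)$ term that can be supported on the central indices after a change of basis. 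I would make this explicit through the QR-type factorization in~\cite[Alg.~1]{Nerini2026Stem}, which also gives the $\mathcal{O}(NK^2)$ cost, since only $N\times K$ products and $K\times K$ inversions appear.

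\textbf{Main obstacle.} The hard step is (iii): showing that $\Sm$ can always be chosen so that the non-central block of $\Bm^\star$ is exactly diagonal and all its off-diagonal entries vanish. Diagonal entries are harmless because they are absorbed by the ground susceptances. This step also has to handle the degenerate case where $-1$ is an eigenvalue of $\Thetam^\star$, in which case $\Bm^\star$ does not exist. I would handle that case with a generic perturbation, a per-port phase reference trick applied before inverting. The closing statement $\St(N,K)\subseteq\Phi(\mathbb{R}^{N_C^{\mathrm{stem}}})$ is then immediate.
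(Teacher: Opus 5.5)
\textbf{There is a genuine gap in step (iii), and the heuristic you give for it points the wrong way.}

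You aim for $\Thetam^\star=-\Id+(\text{rank}\le 2K-1)$, i.e.\ a low-rank $\Thetam^\star+\Id$. But $\Thetam^\star+\Id=2Y_0(Y_0\Id+j\Bm^\star)^{-1}$ must be invertible for $\Bm^\star$ to exist. A matrix of rank at most $2K-1$ on $N+K\ge 2K$ ports is singular. Your target therefore has $-1$ as an eigenvalue of multiplicity at least $N-K+1$, which is the short-circuit (infinite-susceptance) limit on the non-central ports. So the construction produces exactly the degenerate case you then try to repair by a generic perturbation, and that perturbation destroys both $\Phi(\bv^\star)=\Fm^\star$ and the zero pattern.

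For a genuine stem $\Bm$, a Schur-complement/Woodbury computation shows what the non-central block of $\Thetam$ actually looks like. It is the diagonal unitary with entries $(Y_0-j[\Bm]_{d,d})/(Y_0+j[\Bm]_{d,d})\neq -1$, plus a term of rank at most $2K-1$. The matrix $-\Id$ appears only in the limit $[\Bm]_{d,d}\to\infty$. Two further problems: ``supported on the central indices after a change of basis'' cannot be used, because the sparsity pattern is tied to the port basis. And deferring this step to the QR factorization of the cited Alg.~1 adds nothing beyond the citation the paper already relies on. (Minor slip: $\Um$ must span $\mathrm{Null}(\Fm^{\star\HH})$, not $\mathrm{Null}(\Fm^{\star\TT})$, otherwise $\Xm\bar{\Fm}^\star\neq\Zerom$.)

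\textbf{The missing idea is to bypass $\Thetam$ and linearize.} Since $Y_0\Id+j\Bm$ is invertible for every real $\Bm$, set $\Zm=(Y_0\Id+j\Bm)^{-1}[\Id_K;\Zerom]$. Adding and subtracting $(Y_0\Id\pm j\Bm)\Zm$ shows that $\Thetam[\Id_K;\Zerom]=[\Zerom;\Fm^\star]$ (the zero top block being forced, as you noted) is equivalent to
\[
j\Bm\begin{bmatrix}\Id_K\\ \Fm^\star\end{bmatrix}=Y_0\begin{bmatrix}\Id_K\\ -\Fm^\star\end{bmatrix},
\]
which is linear in $\Bm$. Left-multiplying by $[\Id_K,\ \Fm^{\star\HH}]$ makes the left side skew-Hermitian and the right side equal to the Hermitian matrix $Y_0(\Id_K-\Fm^{\star\HH}\Fm^\star)$. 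Both must therefore vanish, so a real symmetric solution requires exactly the Stiefel hypothesis.

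Splitting into real and imaginary parts gives $2K$ real right-hand sides. The number of independent equations, $2K(N{+}K)-K(2K{-}1)=K(2N{+}1)$, equals the number of stem unknowns. The system then solves as follows:
\begin{itemize}
\item Each non-central antenna row gives a $2K\times 2K$ system in its $2K-1$ links to the center plus its diagonal entry. This is where $Q=2K-1$ comes from.
\item The central block is then recovered from the central rows. Its consistency and symmetry follow from the compatibility condition above; the transpose analogue holds automatically.
\end{itemize}
This yields the $\mathcal{O}(NK^2)$ cost directly. It also shows that your free completion parameter is uniquely fixed by the sparsity rather than chosen.

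\textbf{The degenerate case cannot be handled, because it is genuinely unrealizable.} The RF rows force $[\Bm]_{1:K,K+1:K+N}\operatorname{Im}\{\Fm^\star\}=-Y_0\Id_K$. If $\operatorname{Im}\{\Fm^\star\}$ is column-rank deficient, no finite real $\Bm$ of any topology realizes $\Fm^\star$. This covers any real $\Fm^\star$, already for $N=K=1$, $\Fm^\star=1$. Exact realizability thus holds only under a nondegeneracy condition: the per-row systems must be nonsingular, which fails only on a measure-zero set. The statement has to be qualified accordingly, rather than rescued by perturbation.
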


\begin{remark}[Stiefel as a sufficient-direction subset]
\label{rem:stiefel_suff}
Proposition~\ref{prop:stem} is a sufficient-direction statement: $\St(N,K)\subseteq\mathcal{F}_{\mathrm{stem}}$, where $\mathcal{F}_{\mathrm{stem}}\subseteq\{\Fm:\|\Fm\|_2\le 1\}$ is the set of analog beamformers realizable by a \ac{SC}-\ac{MiLAC}.  The reverse inclusion in terms of $\St(N,K)$ fails (a submatrix of a symmetric unitary is in general only a contraction).  Whether $\mathcal{F}_{\mathrm{stem}}$ in general matches the \ac{FC} spectral ball is open, which is left to future work. We henceforth operate on the closed-form-realizable Stiefel subset throughout.
\end{remark}

\begin{proposition}[Hybrid digital-\ac{FC}-\ac{MiLAC} achievability, {\cite[Prop.~3]{Wu2026MiLAC}}]
\label{prop:wu}
Consider the hybrid digital-\ac{FC}-\ac{MiLAC} architecture with $K$ \ac{RF} chains and effective beamformer $\Wm=\Fm\Pm$, where the analog beamformer $\Fm=[\Thetam]_{K+1:K+N,1:K}$ is constrained by the spectral ball $\|\Fm\|_2\le 1$ of~\cite[Sec.~III, eq.~(11)]{Wu2026MiLAC} and $\Pm\in\mathbb{C}^{K\times K}$ is a digital beamformer. For any target $\Wm\in\mathbb{C}^{N\times K}$ with $\tr(\Wm\Wm^\HH)\le P_T$, the \ac{SVD} $\Wm=\Um_W\Sigmam_W\Vm_W^\HH$ supplies $\Fm=[\Um_W]_{:,1:K}$ which satisfies $\|\Fm\|_2\le 1$ at equality, together with $\Pm=[\Sigmam_W]_{1:K,:}\Vm_W^\HH$, so the hybrid architecture achieves the fully digital sum-rate~\eqref{eq:Pdig}.
\end{proposition}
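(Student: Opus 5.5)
The plan is to verify the explicit construction in the statement of Proposition~\ref{prop:wu}: first check that the proposed pair $(\Fm,\Pm)$ reproduces $\Wm$ exactly, then check that $\Fm$ is feasible.

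\emph{Exact factorization.} Take the full \ac{SVD} $\Wm=\Um_W\Sigmam_W\Vm_W^\HH$ with $\Um_W\in\mathbb{C}^{N\times N}$ unitary, $\Sigmam_W\in\mathbb{R}^{N\times K}$, and $\Vm_W\in\mathbb{C}^{K\times K}$ unitary. Since $K\le N$, the rows $K{+}1,\ldots,N$ of $\Sigmam_W$ are zero. Hence $\Um_W\Sigmam_W=[\Um_W]_{:,1:K}[\Sigmam_W]_{1:K,:}$. It follows directly that $\Fm\Pm=[\Um_W]_{:,1:K}[\Sigmam_W]_{1:K,:}\Vm_W^\HH=\Wm$.

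\emph{Feasibility of $\Fm$.} The matrix $\Fm=[\Um_W]_{:,1:K}$ has orthonormal columns, so $\Fm^\HH\Fm=\Id_K$. All its singular values are therefore one, which gives $\|\Fm\|_2=1$; the spectral-norm constraint holds with equality. To connect this to hardware, I would invoke the result of~\cite{Wu2026MiLAC} that every $\Fm$ with $\|\Fm\|_2\le 1$ is the off-diagonal block of some symmetric unitary $\Thetam$. Because a lossless reciprocal \ac{FC}-\ac{MiLAC} can realize any symmetric unitary $\Thetam$ via the Cayley relation~\eqref{eq:cayley}, with a real symmetric $\Bm$ whenever $-1$ is not an eigenvalue of $\Thetam$, this $\Fm$ is realizable.

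\emph{Sum-rate and power.} Since $\Fm\Pm=\Wm$ exactly, the received signals~\eqref{eq:rx} coincide with those of the digital beamformer, so $\mathcal{R}(\Fm\Pm)=\mathcal{R}(\Wm)$. The transmit power is $\tr(\Fm\Pm\Pm^\HH\Fm^\HH)=\tr(\Wm\Wm^\HH)\le P_T$. Applying this with $\Wm=\Wm^\star_{\mathrm{dig}}$ from~\eqref{eq:Pdig} shows that the hybrid architecture attains the fully digital optimum.

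The only delicate step is the realizability of an arbitrary contraction as a block of a symmetric unitary, including the Cayley-invertibility caveat. Since the proposition is quoted from~\cite{Wu2026MiLAC}, I would cite that result rather than re-prove it.
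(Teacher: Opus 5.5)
Your proposal is correct and follows the paper's approach. The paper does not re-prove this restated result; it relies on exactly the SVD construction you verify (the zero bottom rows of $\Sigmam_W$ give $\Fm\Pm=\Wm$), and your observation $\Fm^\HH\Fm=\Id_K$ is the Stiefel property the paper draws out in Corollary~\ref{cor:hybrid}, while deferring spectral-ball realizability (including the Cayley caveat on the eigenvalue $-1$) to~\cite{Wu2026MiLAC} matches how the paper uses this proposition.
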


\begin{corollary}[Hybrid \ac{SC}-\ac{MiLAC} achieves the digital sum-rate]
\label{cor:hybrid}
The hybrid digital-\ac{SC}-\ac{MiLAC} architecture with $K$ \ac{RF} chains and center size $Q=2K{-}1$ achieves the fully digital sum-rate~\eqref{eq:Pdig}. The \ac{SVD} factor $\Fm=[\Um_W]_{:,1:K}$ of Proposition~\ref{prop:wu}, which~\cite[Prop.~3]{Wu2026MiLAC} only commits to the spectral-ball constraint $\|\Fm\|_2\le 1$, in fact has column-orthonormal columns, that is $\Fm\in\St(N,K)$. This sharper structural property, not explicitly addressed in~\cite{Wu2026MiLAC}, makes $\Fm$ realizable by the \ac{SC}-\ac{MiLAC} via Proposition~\ref{prop:stem}, with digital beamformer $\Pm=[\Sigmam_W]_{1:K,:}\Vm_W^\HH$. The tunable-component count drops from $(N{+}K)(N{+}K{+}1)/2$ to $K(2N{+}1)$ without loss of sum-rate.
\end{corollary}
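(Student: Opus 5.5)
The plan is to chain Proposition~\ref{prop:wu} with Proposition~\ref{prop:stem}. The only new ingredient is to observe that the analog factor produced by the SVD already lies on $\St(N,K)$, not merely in the spectral ball.

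First, fix any target $\Wm\in\mathbb{C}^{N\times K}$ with $\tr(\Wm\Wm^\HH)\le P_T$; in particular, take $\Wm=\Wm^\star_{\mathrm{dig}}$ from~\eqref{eq:Pdig}. Take its full SVD $\Wm=\Um_W\Sigmam_W\Vm_W^\HH$ with $\Um_W\in\mathbb{C}^{N\times N}$ unitary, $\Sigmam_W\in\mathbb{R}^{N\times K}$, and $\Vm_W\in\mathbb{C}^{K\times K}$ unitary. Since $K\le N$, the matrix $\Sigmam_W$ has zero rows below row $K$. Hence
\begin{equation*}
\Wm=[\Um_W]_{:,1:K}[\Sigmam_W]_{1:K,:}\Vm_W^\HH=\Fm\Pm .
\end{equation*}
This exact factorization holds regardless of the rank of $\Wm$.

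The structural step is then immediate. Because $\Um_W$ is unitary, any subset of its columns is orthonormal. Therefore $\Fm^\HH\Fm=[\Um_W^\HH\Um_W]_{1:K,1:K}=\Id_K$, i.e., $\Fm\in\St(N,K)$. This also confirms the claim $\|\Fm\|_2=1$ made in Proposition~\ref{prop:wu}.

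Next, invoke Proposition~\ref{prop:stem} with $\Fm^\star=\Fm$. It yields $\bv^\star=\Psi(\Fm)\in\mathbb{R}^{N_C^{\mathrm{stem}}}$ with $\Phi(\bv^\star)=\Fm$, so the SC-MiLAC with $Q=2K-1$ realizes $\Fm$ exactly. The digital beamformer $\Pm=[\Sigmam_W]_{1:K,:}\Vm_W^\HH$ is an unconstrained $K\times K$ matrix, so the cascade produces $\Fm\Pm=\Wm$ exactly.

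It remains to check power. We have $\tr(\Wm\Wm^\HH)=\tr(\Pm^\HH\Fm^\HH\Fm\Pm)=\|\Pm\|_F^2$, using $\Fm^\HH\Fm=\Id_K$. This shows the budget can be charged to the digital stage equivalently, consistent with the normalization convention in the footnote to~\eqref{eq:F_block}. Since $\mathcal{R}$ depends on $\Wm$ only, the hybrid SC design attains $\mathcal{R}(\Wm^\star_{\mathrm{dig}})$. Conversely, no hybrid design can exceed the fully digital optimum, because its effective $\Wm$ is itself feasible for~\eqref{eq:Pdig}. Together these give equality. The component count $K(2N+1)$ versus $(N+K)(N+K+1)/2$ then follows directly from~\eqref{eq:Nstem} and~\eqref{eq:Nfull}.

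I expect no serious obstacle. The one point requiring care is rank-deficient $\Wm$, where the thin SVD would supply fewer than $K$ left singular vectors. Using the full unitary $\Um_W$, with arbitrary orthonormal completion, resolves this, since the corresponding zero singular values make the extra columns irrelevant to the product $\Fm\Pm$.
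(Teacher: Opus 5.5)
Your proposal is correct and follows essentially the same route as the paper: you note that $[\Um_W]_{:,1:K}$ consists of orthonormal columns of a unitary matrix, so it lies in $\St(N,K)$, and you then chain Proposition~\ref{prop:wu} with Proposition~\ref{prop:stem}. Your additional checks are sound and only make explicit what the paper leaves implicit: the power accounting via $\|\Pm\|_F^2$, the converse bound against the digital optimum, and the rank-deficient case handled through the full SVD.
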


Corollary~\ref{cor:hybrid} settles the hybrid case: a single \ac{SVD} of any fully digital solution yields a hybrid digital-\ac{SC}-\ac{MiLAC} realization at the same sum-rate.

We focus instead on the pure-analog setting~\eqref{eq:WfromF} for two reasons. First, it is operationally attractive: dropping the symbol-rate baseband enables low-resolution \acp{DAC}~\cite[Sec.~I]{Wu2026MiLAC}, while its rate price over fully digital beamforming vanishes as user channels orthogonalize at large $N$~\cite[Remark~2]{Wu2026MiLAC},\cite{Fang2026Performance}. Second, although the \ac{SC} topology is capacity-achieving point-to-point~\cite{Nerini2026Stem}, its multiuser behavior is unknown, as existing near-digital \ac{MU-MISO} results rely on the \ac{FC} architecture. Without the baseband, the analog beamformer $\Fm$ must itself shape inter-user interference, which is the design problem addressed next.

%% ======================================================================
\section{Stem-Connected MiLAC-aided MU-MISO Beamforming: Ideal Case}
\label{sec:ideal}

This section develops the \ac{SC}-\ac{MiLAC}-aided beamforming under the ideal assumption that every tunable susceptance can be set to an arbitrary real value, i.e., $\bv\in\mathbb{R}^{N_C^{\mathrm{stem}}}$. The results are the starting point against which the hardware-compliant constructions of Section~\ref{sec:synthesis} are measured.

\subsection{Problem Formulation on the Stiefel Manifold}
\label{sec:ideal_form}

Substituting~\eqref{eq:WfromF} into~\eqref{eq:rate}, the \ac{SC}-\ac{MiLAC}-aided design must satisfy
\begin{equation}\label{eq:pure_model}
    \Wm = \Fm\,\diag(\sqrt{\pv}),\ \Fm\in\mathcal{F}_{\mathrm{stem}},\ \onev_K^\TT\pv\le P_T,
\end{equation}
where $\mathcal{F}_{\mathrm{stem}}$ is the stem-realizable set of Remark~\ref{rem:stiefel_suff}. Optimizing directly over $\mathcal{F}_{\mathrm{stem}}$ faces two obstacles: the set has no closed-form description, and checking membership requires forming the full $(N{+}K)$-dimensional scattering matrix. We therefore restrict attention to the Stiefel subset of $\mathcal{F}_{\mathrm{stem}}$ and formulate
\begin{equation}\label{eq:Pstem}
\mathcal{P}_{\mathrm{stem}}:\;\max_{\Fm\in\St(N,K),\,\pv\in\mathbb{R}_+^K,\,\onev_K^\TT\pv\le P_T}\;\mathcal{R}\bigl(\Fm\diag(\sqrt{\pv})\bigr).
\end{equation}
Two arguments justify this restriction. First, Proposition~\ref{prop:stem} gives $\St(N,K)\subseteq\mathcal{F}_{\mathrm{stem}}$, so every admissible $\Fm$ in~\eqref{eq:Pstem} is stem-realizable. Second, the closed-form susceptance map $\Psi$ takes Stiefel inputs at $\mathcal{O}(NK^2)$, whereas no analogous synthesis is available beyond $\St(N,K)$ at this stage. Whether the restriction loses any sum-rate against the full $\mathcal{F}_{\mathrm{stem}}$ design is the question that the phase diagram of the next subsection settles. Problem~\eqref{eq:Pstem} is the \ac{SC}-\ac{MiLAC}-aided beamforming counterpart (with stem-realizability instantiated via Stiefel manifold) of the \ac{FC}-\ac{MiLAC}-aided one of~\cite[Sec.~III, eq.~(11)]{Wu2026MiLAC}, which replaces the Stiefel constraint by the spectral ball $\{\Fm:\|\Fm\|_2\le 1\}$.

\subsection{Phase Diagram in \texorpdfstring{$N/K$}{N/K}}
\label{sec:phase}

We compare~\eqref{eq:Pstem} with its \ac{FC}-\ac{MiLAC}-aided counterpart
\begin{equation}\label{eq:Pfull}
\mathcal{P}_{\mathrm{full}}:\;\max_{\Fm:\|\Fm\|_2\le 1,\,\pv\in\mathbb{R}_+^K,\,\onev_K^\TT\pv\le P_T}\;\mathcal{R}\bigl(\Fm\diag(\sqrt{\pv})\bigr).
\end{equation}
Let $\mathcal{R}_{\mathrm{stem}}^\star(\Hm)$ and $\mathcal{R}_{\mathrm{full}}^\star(\Hm)$ denote the optimal values of~\eqref{eq:Pstem} and~\eqref{eq:Pfull}, respectively, where $\Hm=[\hv_1,\ldots,\hv_K]\in\mathbb{C}^{N\times K}$ is assumed full column rank. This holds almost surely under any continuous channel distribution (e.g., Rayleigh/Rician fading).

\begin{proposition}[Phase diagram]
\label{prop:phase}
Let $K\le N$ (which always holds in practice). Then $\mathcal{R}_{\mathrm{stem}}^\star(\Hm)\le\mathcal{R}_{\mathrm{full}}^\star(\Hm)$ for every $\Hm$, with equality whenever $N\ge 2K-1$ (i.e., the free regime).
\end{proposition}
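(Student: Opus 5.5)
The inequality $\mathcal{R}_{\mathrm{stem}}^\star(\Hm)\le\mathcal{R}_{\mathrm{full}}^\star(\Hm)$ should be immediate. Every $\Fm\in\St(N,K)$ satisfies $\Fm^\HH\Fm=\Id_K$, hence $\|\Fm\|_2=1$. So the feasible set of~\eqref{eq:Pstem} is contained in that of~\eqref{eq:Pfull}, with the same objective. All the work is in the equality claim when $N\ge 2K-1$. For that I will take an arbitrary feasible pair $(\Fm,\pv)$ of~\eqref{eq:Pfull} and build a Stiefel pair $(\Fm',\pv')$ with $\onev_K^\TT\pv'\le P_T$ and the same sum-rate. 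Applying this to a maximizer, or to a maximizing sequence if attainment is in doubt, gives $\mathcal{R}_{\mathrm{stem}}^\star\ge\mathcal{R}_{\mathrm{full}}^\star$.

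\emph{Step 1: the rate only sees $\Hm^\HH\Fm$.} By~\eqref{eq:rate}, $\mathcal{R}(\Fm\diag(\sqrt{\pv}))$ depends on $\Fm$ only through the products $\hv_k^\HH\Fm$, i.e., through $\Hm^\HH\Fm$. Since $\Hm$ has full column rank, I take an orthonormal basis $\Um_H\in\mathbb{C}^{N\times K}$ of $\mathrm{Range}(\Hm)$ and a complement $\Um_\perp\in\mathbb{C}^{N\times(N-K)}$. Then $\Hm^\HH\Fm=\Hm^\HH\Um_H\Am$ with $\Am\triangleq\Um_H^\HH\Fm$, and $\|\Am\|_2\le\|\Fm\|_2\le1$. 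Any $\Fm'=\Um_H\Am'+\Um_\perp\Cm$ is rate-equivalent to $\Fm$ (with suitable powers) provided $\Am'$ is a positive multiple of $\Am$, with the powers rescaled accordingly.

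\emph{Step 2: normalize so that $\|\Am\|_2=1$.} If $\Am=\Zerom$ the rate is zero and the claim is trivial. Otherwise set $c=1/\|\Am\|_2\ge1$, $\Am'=c\Am$ and $\pv'=\pv/c^2$. Then $\Fm'\diag(\sqrt{\pv'})$ and $\Fm\diag(\sqrt{\pv})$ produce identical $\Hm^\HH\Wm$, so the rate is unchanged, while $\onev_K^\TT\pv'\le\onev_K^\TT\pv\le P_T$. Scaling up the analog part and down the powers is thus free. Its payoff is that $\Am'$ now has at least one unit singular value, so $\rank(\Id_K-\Am'^\HH\Am')\le K-1$.

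\emph{Step 3: complete to a Stiefel matrix.} I need $\Cm\in\mathbb{C}^{(N-K)\times K}$ with $\Cm^\HH\Cm=\Id_K-\Am'^\HH\Am'\succeq\Zerom$. Write the eigendecomposition $\Id_K-\Am'^\HH\Am'=\Em_r\Lambda_r\Em_r^\HH$ with $r\le K-1$. Then take $\Cm=[\Lambda_r^{1/2}\Em_r^\HH;\Zerom]$, which fits in $N-K$ rows exactly when $N-K\ge r$. This is guaranteed by $N-K\ge K-1$, i.e., $N\ge 2K-1$. Then $\Fm'=\Um_H\Am'+\Um_\perp\Cm$ satisfies $\Fm'^\HH\Fm'=\Am'^\HH\Am'+\Cm^\HH\Cm=\Id_K$, so $\Fm'\in\St(N,K)$, and $\Hm^\HH\Fm'=c\,\Hm^\HH\Fm$.

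The main obstacle is seeing that the naive dimension count only gives $N\ge 2K$, since $\Id_K-\Am^\HH\Am$ can have rank $K$. The saving of one dimension comes precisely from the scaling argument in Step 2. So the care point is checking that the rescaling keeps the power constraint feasible (it only lowers $\onev_K^\TT\pv$) and leaves every SINR unchanged. I will also remark that for $N<2K-1$ the construction can fail, which is consistent with the statement claiming equality only in the free regime.
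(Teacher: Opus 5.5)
Your proof is correct, but it obtains the pinned unit singular value differently from the paper.

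The paper argues at an optimum. It takes a maximizer of $\mathcal{P}_{\mathrm{full}}$, projects it onto $\mathrm{Range}(\Hm)$ (so $\Vm=\Zerom$), and uses that every SINR is non-decreasing under $\Fm\mapsto t\Fm$ with $\pv$ fixed to push the maximizer onto the boundary $\|\Fm\|_2=1$. This gives $\sigma_1(\Ym^\star_{\mathrm{full}})=1$, so $\Ym^\star_{\mathrm{full}}$ meets the rank constraint defining $\mathcal{Y}_{\mathrm{stem}}$.

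You instead rescale an arbitrary feasible point: you multiply the range-space block by $c=1/\|\Am\|_2$ and divide the powers by $c^2$, which leaves $\Hm^\HH\Wm$ exactly invariant. You then complete the result explicitly to a Stiefel matrix. Your $\Am$ is the paper's $\Ym$ for the basis $\Um_H=\Hm\overline{\Hm}^{-1/2}$, and your completion block $\mathbf{C}$ is its $\Vm$.

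What your route buys:
\begin{itemize}
\item It is pointwise rather than optimality-based, so it needs neither attainment of the maximum nor the SINR-monotonicity step.
\item It also avoids the implicit ordering of projection and rescaling that the paper's argument relies on.
\item It proves more than the statement. Every full-feasible effective channel $\Hm^\HH\Wm$ is reproduced by a Stiefel design using no more power. Equality therefore holds for any utility depending on $\Wm$ only through $\Hm^\HH\Wm$ (e.g., weighted sum-rate or max--min SINR), i.e., the achievable rate regions coincide for $N\ge 2K-1$.
\end{itemize}

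What the paper's route buys: an explicit description of $\mathcal{Y}_{\mathrm{stem}}$, including the observation that it equals $\mathcal{Y}_{\mathrm{full}}$ outright for $N\ge 2K$. It also gives the structural fact that the projected FC optimum saturates the spectral norm. The paper reuses both in Remark~\ref{rem:rank} and in its discussion of the binding regime.
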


\begin{proof}
The proof has three steps, elaborated below: recast both problems in a common coordinate, compare the resulting feasible sets to obtain $\mathcal{R}_{\mathrm{stem}}^\star\le\mathcal{R}_{\mathrm{full}}^\star$, and show this holds with equality once $N\ge 2K-1$.

\emph{Step~1 (common coordinate).} The sum-rate~\eqref{eq:rate} depends on $\Fm$ only through $\Hm^\HH\Fm$, so the rate sees only the part of $\Fm$ in $\mathrm{Range}(\Hm)$, while the complementary part in $\mathrm{Null}(\Hm^\HH)$ enters the constraint set but not the objective. Splitting $\Fm$ along this orthogonal decomposition will reduce~\eqref{eq:Pstem} and~\eqref{eq:Pfull} to two problems sharing a common rate objective in a common coordinate, differing only in the feasible set imposed on that coordinate.

Let $\Nm_H\in\mathbb{C}^{N\times(N-K)}$ be an orthonormal basis of $\mathrm{Null}(\Hm^\HH)$ and $\overline{\Hm}\triangleq\Hm^\HH\Hm\in\mathbb{C}^{K\times K}$, which is Hermitian positive definite under the full column rank of $\Hm$. Decomposing $\Fm$ via the projectors $\Hm\,\overline{\Hm}^{-1}\Hm^\HH$ onto $\mathrm{Range}(\Hm)$ and $\Nm_H\Nm_H^\HH$ onto $\mathrm{Null}(\Hm^\HH)$, and writing each summand in the natural basis, gives
\begin{equation}\label{eq:FhF}
    \Fm = \Hm\,\overline{\Hm}^{-1}\Zm + \Nm_H\Vm,
\end{equation}
with coefficient blocks $\Zm\triangleq\Hm^\HH\Fm\in\mathbb{C}^{K\times K}$ (user-visible) and $\Vm\triangleq\Nm_H^\HH\Fm\in\mathbb{C}^{(N-K)\times K}$ (invisible). The sum-rate is a function of $(\Zm,\pv)$ alone. Orthonormalising $\Zm$ by $\overline{\Hm}^{1/2}$ defines
\begin{equation}\label{eq:Ydef}
    \Ym \triangleq \overline{\Hm}^{-1/2}\Zm = \overline{\Hm}^{-1/2}\Hm^\HH\Fm\in\mathbb{C}^{K\times K}.
\end{equation}
In coordinates $(\Ym,\Vm,\pv)$, both~\eqref{eq:Pstem} and~\eqref{eq:Pfull} share the objective $\mathcal{R}(\Ym,\pv)$ and differ only in the constraints they impose on $(\Ym,\Vm)$.

\emph{Step~2 (feasible sets and the inequality).} We first expand $\Fm^\HH\Fm$ from~\eqref{eq:FhF}, an identity underlying both feasible-set characterisations below. Taking the Hermitian transpose of~\eqref{eq:FhF} gives $\Fm^\HH = \Zm^\HH\overline{\Hm}^{-1}\Hm^\HH + \Vm^\HH\Nm_H^\HH$, so
\begin{equation*}
\begin{aligned}
\Fm^\HH\Fm &= \bigl(\Zm^\HH\overline{\Hm}^{-1}\Hm^\HH + \Vm^\HH\Nm_H^\HH\bigr)\bigl(\Hm\overline{\Hm}^{-1}\Zm + \Nm_H\Vm\bigr)\\
&= \Zm^\HH\overline{\Hm}^{-1}\Hm^\HH\Hm\overline{\Hm}^{-1}\Zm + \Zm^\HH\overline{\Hm}^{-1}\Hm^\HH\Nm_H\Vm\\
&\quad + \Vm^\HH\Nm_H^\HH\Hm\overline{\Hm}^{-1}\Zm + \Vm^\HH\Nm_H^\HH\Nm_H\Vm.
\end{aligned}
\end{equation*}
The two cross terms vanish since $\Hm^\HH\Nm_H=\mathbf{0}$, as $\Nm_H$ spans $\mathrm{Null}(\Hm^\HH)$. The first term uses $\Hm^\HH\Hm=\overline{\Hm}$ to give $\Zm^\HH\overline{\Hm}^{-1}\overline{\Hm}\,\overline{\Hm}^{-1}\Zm=\Zm^\HH\overline{\Hm}^{-1}\Zm$, and the last term uses $\Nm_H^\HH\Nm_H=\Id_{N-K}$ (orthonormal columns) to give $\Vm^\HH\Vm$. Substituting $\Zm=\overline{\Hm}^{1/2}\Ym$ from~\eqref{eq:Ydef},
\begin{equation*}
\Zm^\HH\overline{\Hm}^{-1}\Zm = \Ym^\HH\,\overline{\Hm}^{1/2}\overline{\Hm}^{-1}\overline{\Hm}^{1/2}\,\Ym = \Ym^\HH\Ym,
\end{equation*}
where the middle factor $\overline{\Hm}^{1/2}\overline{\Hm}^{-1}\overline{\Hm}^{1/2}=\Id_K$ via functional calculus on the Hermitian positive-definite $\overline{\Hm}$ (positive definite under the full column rank of $\Hm$). We obtain the algebraic identity
\begin{equation}\label{eq:FFalg}
    \Fm^\HH\Fm = \Ym^\HH\Ym + \Vm^\HH\Vm,
\end{equation}
which we now apply to each architecture.

For~\eqref{eq:Pfull}, the spectral-norm constraint $\|\Fm\|_2\le 1$ is equivalent to $\Fm^\HH\Fm\preceq\Id_K$, since $\|\Fm\|_2^2=\lambda_{\max}(\Fm^\HH\Fm)$, where $\lambda_{\max}(\cdot)$ denotes the largest eigenvalue of a Hermitian matrix. Combined with~\eqref{eq:FFalg}, this becomes $\Ym^\HH\Ym+\Vm^\HH\Vm\preceq\Id_K$. Any feasible $(\Ym,\Vm)$ then satisfies $\Ym^\HH\Ym\preceq\Id_K-\Vm^\HH\Vm\preceq\Id_K$, since $\Vm^\HH\Vm\succeq\mathbf{0}$, giving the forward inclusion. Conversely, any $\Ym$ with $\Ym^\HH\Ym\preceq\Id_K$ is realised by the choice $\Vm=\mathbf{0}$ and $\Fm=\Hm\,\overline{\Hm}^{-1/2}\Ym$, which is feasible since $\Fm^\HH\Fm = \Ym^\HH\,\overline{\Hm}^{-1/2}\overline{\Hm}\,\overline{\Hm}^{-1/2}\Ym = \Ym^\HH\Ym\preceq\Id_K$, and which yields the prescribed $\Ym$ via~\eqref{eq:Ydef} because $\overline{\Hm}^{-1/2}\Hm^\HH\Fm = \overline{\Hm}^{-1/2}\overline{\Hm}\,\overline{\Hm}^{-1/2}\Ym = \Ym$. The feasible set in $\Ym$ is therefore the spectral ball,
\begin{equation}\label{eq:Yfull_set}
    \mathcal{Y}_{\mathrm{full}} = \{\Ym\in\mathbb{C}^{K\times K}:\Ym^\HH\Ym\preceq\Id_K\},
\end{equation}
recovering the parameterisation of~\cite{Wu2026MiLAC}.

For~\eqref{eq:Pstem}, the Stiefel constraint $\Fm^\HH\Fm=\Id_K$ combined with~\eqref{eq:FFalg} yields
\begin{equation}\label{eq:YV}
    \Ym^\HH\Ym + \Vm^\HH\Vm = \Id_K,
\end{equation}
so the feasible set in $\Ym$ is
\begin{equation}\label{eq:Ystem_set}
\begin{aligned}
    \mathcal{Y}_{\mathrm{stem}} = \bigl\{\Ym\in\mathbb{C}^{K\times K}:\,&\Ym^\HH\Ym\preceq\Id_K,\\
    &\rank(\Id_K-\Ym^\HH\Ym)\le N{-}K\bigr\},
\end{aligned}
\end{equation}
since $\Vm^\HH\Vm$ has rank at most $N{-}K$. Clearly $\mathcal{Y}_{\mathrm{stem}}\subseteq\mathcal{Y}_{\mathrm{full}}$, hence $\mathcal{R}_{\mathrm{stem}}^\star\le\mathcal{R}_{\mathrm{full}}^\star$.

\emph{Step~3 (equality for $N\ge 2K-1$).} Let $(\Fm^\star_{\mathrm{full}},\pv^\star_{\mathrm{full}})$ denote a maximizer of~\eqref{eq:Pfull} and $\Ym^\star_{\mathrm{full}}\triangleq\overline{\Hm}^{-1/2}\Hm^\HH\Fm^\star_{\mathrm{full}}$ its image in $\Ym$-coordinates. For any matrix $\Am\in\mathbb{C}^{m\times n}$, let $\sigma_1(\Am)\ge\cdots\ge\sigma_{\min(m,n)}(\Am)\ge 0$ denote its singular values in non-increasing order, which are related to the largest eigenvalue via $\sigma_1(\Am)^2 = \lambda_{\max}(\Am^\HH\Am) = \|\Am\|_2^2$. We first show $\sigma_1(\Ym^\star_{\mathrm{full}})=1$ via a scaling argument.
If $\Fm$ is feasible for~\eqref{eq:Pfull}, so is $t\Fm$ for any $t\in(0,1]$, with $\pv$ unchanged. The per-user \ac{SINR} at $t\Fm$ is
\begin{equation*}
\Gamma_k(t)\;=\;\frac{t^2\,p_k\,|\hv_k^\HH\fv_k|^2}{t^2\sum_{j\ne k}p_j\,|\hv_k^\HH\fv_j|^2 + \sigma^2},
\end{equation*}
which is monotone non-decreasing in $t$, since signal and interference scale together as $t^2$ while the noise floor $\sigma^2$ stays fixed. Each $\mathrm{SINR}_k$ is therefore maximized at the largest feasible $t$, namely the $t$ that puts $t\Fm$ on the boundary of the spectral ball, so $\|\Fm^\star_{\mathrm{full}}\|_2=1$. Since the sum-rate depends on $\Fm$ only through $\Hm^\HH\Fm$, we may set $\Vm^\star_{\mathrm{full}}=\Zerom$ without affecting optimality (any feasible $\Fm$ can be replaced by its projection onto $\mathrm{Range}(\Hm)$, which preserves $\Hm^\HH\Fm$ and contracts $\|\Fm\|_2$). Identity~\eqref{eq:FFalg} then gives $\|\Fm^\star_{\mathrm{full}}\|_2^2 = \lambda_{\max}\bigl((\Ym^\star_{\mathrm{full}})^\HH\Ym^\star_{\mathrm{full}}\bigr) = \sigma_1(\Ym^\star_{\mathrm{full}})^2$, so $\sigma_1(\Ym^\star_{\mathrm{full}})=1$.

\begin{table*}[t]
\centering
\caption{The four \ac{MiLAC} architectures discussed in this paper.}
\label{tab:milac_variants}
\footnotesize
\setlength{\tabcolsep}{5pt}
\begin{tabular}{@{}p{0.32\linewidth}ccp{0.36\linewidth}@{}}
\toprule
Architecture & Feasible set of $\Fm$ & Tunable component count & Performance/flexibility \\
\midrule
Hybrid digital-\ac{FC}-\ac{MiLAC} beamforming~\cite{Wu2026MiLAC} & $\|\Fm\|_2\le 1$ & $(N{+}K)(N{+}K{+}1)/2$ & Equal to fully digital \\
Hybrid digital-\ac{SC}-\ac{MiLAC} beamforming (Cor.~\ref{cor:hybrid}) & $\St(N,K)$ & $K(2N{+}1)$ & Equal to fully digital \\
\ac{FC}-\ac{MiLAC}-aided beamforming~\cite{Wu2026MiLAC} & $\|\Fm\|_2\le 1$ & $(N{+}K)(N{+}K{+}1)/2$ & $<$ fully digital, with the rate gap closing asymptotically~\cite{Wu2026MiLAC} \\
\ac{SC}-\ac{MiLAC}-aided beamforming (Prop.~\ref{prop:phase}) & $\St(N,K)$ & $K(2N{+}1)$ & Equal to \ac{FC} when $N\ge 2K-1$, $\le$ \ac{FC} when $K\le N\le 2K-2$ (Remark~\ref{rem:binding}) \\
\bottomrule
\end{tabular}
\end{table*}

It remains to show that $\Ym^\star_{\mathrm{full}}\in\mathcal{Y}_{\mathrm{stem}}$ whenever $N\ge 2K-1$. Granted this, the maximizer $(\Ym^\star_{\mathrm{full}},\pv^\star_{\mathrm{full}})$ of the relaxed problem~\eqref{eq:Pfull} is feasible for the more restrictive problem~\eqref{eq:Pstem} with the same objective value, so
\begin{equation*}
\mathcal{R}_{\mathrm{stem}}^\star(\Hm) \;\ge\; \mathcal{R}\bigl(\Ym^\star_{\mathrm{full}}\diag(\sqrt{\pv^\star_{\mathrm{full}}})\bigr) \;=\; \mathcal{R}_{\mathrm{full}}^\star(\Hm),
\end{equation*}
which combined with the subset inequality $\mathcal{R}_{\mathrm{stem}}^\star(\Hm) \le \mathcal{R}_{\mathrm{full}}^\star(\Hm)$ yields the claimed equality.

Membership in $\mathcal{Y}_{\mathrm{stem}}$ requires verifying the rank constraint
\begin{equation*}
\rank(\Id_K - \Ym^\HH\Ym) \;\le\; N-K
\end{equation*}
of~\eqref{eq:Ystem_set}. For any $\Ym\in\mathbb{C}^{K\times K}$, the Hermitian positive-semidefinite matrix $\Ym^\HH\Ym$ has eigenvalues $\sigma_1(\Ym)^2,\ldots,\sigma_K(\Ym)^2$, so $\Id_K-\Ym^\HH\Ym$ has eigenvalues $1-\sigma_k(\Ym)^2$ for $k=1,\ldots,K$, and therefore $\rank(\Id_K-\Ym^\HH\Ym)$ equals the number of indices $k\in\{1,\ldots,K\}$ with $\sigma_k(\Ym)<1$.
Specializing to $\Ym=\Ym^\star_{\mathrm{full}}$, the scaling argument above pins $\sigma_1(\Ym^\star_{\mathrm{full}})=1$, leaving at most $K-1$ singular values strictly below one and hence $\rank(\Id_K-(\Ym^\star_{\mathrm{full}})^\HH\Ym^\star_{\mathrm{full}})\le K-1$. The free-regime hypothesis $N\ge 2K-1$ is exactly $K-1\le N-K$, so the rank constraint is satisfied and $\Ym^\star_{\mathrm{full}}\in\mathcal{Y}_{\mathrm{stem}}$ as required.

In the subcase $N\ge 2K$, the bound tightens to $K\le N-K$. The rank constraint is then automatically satisfied by \emph{every} $\Ym\in\mathcal{Y}_{\mathrm{full}}$ since $\Id_K-\Ym^\HH\Ym$ is a $K\times K$ matrix and so has rank at most $K\le N-K$. The two feasible sets therefore coincide, $\mathcal{Y}_{\mathrm{stem}}=\mathcal{Y}_{\mathrm{full}}$, and the rate equality is a feasible-set identity rather than a property of any particular maximizer.
\end{proof}

Table~\ref{tab:milac_variants} summarizes the four \ac{MiLAC} architectures relevant to the development so far, contrasting the analog feasible set, component count, and performance/flexibility of each variant against the fully digital benchmark. The fourth row is the design space addressed by the rest of this section. The \ac{SC}-\ac{MiLAC}-aided beamforming architecture inherits the linear-in-$N$ component count of the stem reduction and matches the \ac{FC} pure-analog benchmark when $N\ge 2K-1$, with the binding-regime $\le$ relationship discussed in Remark~\ref{rem:binding}.

\begin{remark}[Binding regime $K\le N\le 2K-2$]
\label{rem:binding}
Proposition~\ref{prop:phase} leaves open whether the inequality $\mathcal{R}_{\mathrm{stem}}^\star(\Hm)\le\mathcal{R}_{\mathrm{full}}^\star(\Hm)$ is strict over the complementary regime $K\le N\le 2K-2$. We do not pursue a rigorous resolution for general $(N,K)$, because the free regime $N\ge 2K-1$ already covers the large-antenna-array regime that motivates \ac{MiLAC}.\footnote{Remarks~\ref{rem:binding} and~\ref{rem:rank} invite a comparison with \ac{BD-RIS}: for \ac{BD-RIS}-aided multiuser \ac{MIMO},~\cite{Wu2025BDRIS} shows that reduced-complexity architectures, including a stem-connected one, can match the fully-connected performance without an analogous gap. Our binding-regime gap arises only because we characterize the \ac{SC}-\ac{MiLAC} feasible set for $\Fm$ solely by the Stiefel manifold. A full characterization of the stem-realizable set may bridge it even in the binding regime, which is beyond the scope of this paper and left for future work.} Any binding-regime example requires $N$ to be comparable to $K$, which is atypical for this architecture. Nevertheless, Section~\ref{sec:sim_ideal} numerically corroborates strict inequality at $(N,K)=(6,4)$, with a gap that widens with the \ac{SNR}.
\end{remark}

\begin{remark}[Rank rigidity of $\mathcal{Y}_{\mathrm{stem}}$]
\label{rem:rank}
The feasible-set rank constraint $\rank(\Id_K-\Ym^\HH\Ym)\le N-K$ in $\mathcal{Y}_{\mathrm{stem}}$ forces at least $\ell\triangleq 2K-N$ singular values of $\Ym$ to equal unity for any $N<2K$, since this is the only way the residual $\Id_K-\Ym^\HH\Ym$ admits a rank-$(N-K)$ factorization $\Vm^\HH\Vm$ with $\Vm\in\mathbb{C}^{(N-K)\times K}$. The number of pinned coordinates $\ell$ grows in unit steps from $\ell=1$ at $N=2K-1$ to $\ell=K$ at $N=K$. The \ac{FC} benchmark imposes no such constraint, since its null-space residual absorbs arbitrary rank up to $K$. The rate consequences of this feasibility-set rigidity split cleanly across the boundary $N=2K-1$: the proof of Proposition~\ref{prop:phase} shows that the single pinned coordinate at $N=2K-1$ is already provided by the spectral-ball maximizer and therefore costs no rate, whereas the deeper binding regime $K\le N\le 2K-2$ is the subject of Remark~\ref{rem:binding}.
\end{remark}

Concretely, for $K\ll N$, the ratio
\begin{equation}\label{eq:ratio}
    \frac{N_C^{\mathrm{full}}}{N_C^{\mathrm{stem}}} = \frac{(N{+}K)(N{+}K{+}1)}{2K(2N{+}1)} \;\approx\; \frac{N}{4K}
\end{equation}
is large in practice: at $(N,K)=(128,4)$ the \ac{FC}-\ac{MiLAC} needs $8778$ tunable components versus only $1028$ for the \ac{SC}-\ac{MiLAC}, an $8.5$ times reduction, and at $N=256$ it needs $33930$ versus $2052$, a $16.5$ times reduction, underscoring the hardware saving of the \ac{SC} architecture.

\subsection{WMMSE-Based Solution}
\label{sec:wmmse}

We solve~\eqref{eq:Pstem} by the \ac{WMMSE} transform of~\cite{Shi2011WMMSE}. Let $u_k\in\mathbb{C}$ denote the scalar receive filter for user $k$. The mean-square error at user $k$ under per-stream beamformer $\wv_k=\fv_k\sqrt{p_k}$ is
\begin{align}\label{eq:mse_stem}
    e_k(\Fm,\pv) = &\bigl|1-u_k^\ast\hv_k^\HH\fv_k\sqrt{p_k}\bigr|^2 + \sum_{j\ne k}p_j\bigl|u_k^\ast\hv_k^\HH\fv_j\bigr|^2\notag \\
    &+ |u_k|^2\sigma^2.
\end{align}
Introduce positive weights $\omega_k>0$, and collect them with $u_k$ into $\uv=[u_1,\ldots,u_K]^\TT$ and $\omegav=[\omega_1,\ldots,\omega_K]^\TT$. The \ac{WMMSE} identity~\cite{Shi2011WMMSE},
\begin{equation}\label{eq:wmmse_identity_stem}
    \min_{u_k,\omega_k>0}\bigl(\omega_k e_k(\Fm,\pv)-\ln\omega_k\bigr) = 1 - \ln\bigl(1+\Gamma_k(\Fm,\pv)\bigr),
\end{equation}
transforms~\eqref{eq:Pstem} equivalently into
\begin{subequations}\label{eq:wmmse_stem}
\begin{align}
    \min_{\Fm,\pv,\uv,\omegav}\quad & \sum_{k=1}^K\bigl(\omega_k e_k(\Fm,\pv)-\ln\omega_k\bigr) \label{eq:wmmse_stem_obj}\\
    \text{s.t.}\quad & \Fm\in\St(N,K),\ \onev_K^\TT\pv\le P_T. \label{eq:wmmse_stem_c}
\end{align}
\end{subequations}
We solve~\eqref{eq:wmmse_stem} by alternating minimization over the four variable blocks $(\uv,\omegav,\pv,\Fm)$.

\subsubsection{Update of $u_k$ and $\omega_k$}
\label{sec:stem_u_omega}

For fixed $(\Fm,\pv)$, the problem decouples across users. Minimizing~\eqref{eq:mse_stem} with respect to $u_k$ gives the \ac{MMSE} receiver
\begin{equation}\label{eq:u_update_stem}
    u_k = \frac{\sqrt{p_k}\,\hv_k^\HH\fv_k}{\sum_{j=1}^K p_j|\hv_k^\HH\fv_j|^2 + \sigma^2},
\end{equation}
and minimizing with respect to $\omega_k>0$ at fixed $u_k$ yields
\begin{equation}\label{eq:omega_update_stem}
    \omega_k = \frac{1}{e_k(\Fm,\pv)},
\end{equation}
with $e_k(\Fm,\pv)$ evaluated using the updated $u_k$.

\subsubsection{Update of $\pv$}
\label{sec:stem_p_update}

For fixed $(\Fm,\uv,\omegav)$, the $\pv$-subproblem is convex and separable. Define
\begin{align}
    a_k &= \omega_k\,\Re\{u_k^\ast\,\hv_k^\HH\fv_k\}, \label{eq:a_stem}\\
    b_k &= \sum_{i=1}^K\omega_i|u_i|^2|\hv_i^\HH\fv_k|^2. \label{eq:b_stem}
\end{align}
The \ac{KKT}-condition closed form is
\begin{equation}\label{eq:p_update_stem}
    p_k = \left[\frac{a_k}{b_k+\mu_p}\right]_+^2,
\end{equation}
where $\mu_p\ge 0$ is the dual variable associated with the budget $\onev_K^\TT\pv\le P_T$, found by bisection.

\subsubsection{Update of $\Fm$ on the Stiefel manifold}
\label{sec:stem_F_update}

For fixed $(\pv,\uv,\omegav)$, the $\Fm$-subproblem is formulated as
\begin{subequations}\label{eq:F_subprob_stem}
\begin{align}
    \min_{\Fm}\quad & \mathcal{J}(\Fm;\pv,\Um,\Omegam) \label{eq:F_subprob_stem_obj}\\
    \text{s.t.}\quad & \Fm\in\St(N,K), \label{eq:F_subprob_stem_c}
\end{align}
\end{subequations}
where, with $\Um=\diag(\uv)$ and $\Omegam=\diag(\omegav)$,
\begin{equation}\label{eq:E_def_stem}
    \Em \triangleq \Id_K-\Um^\HH\Hm^\HH\Fm\diag(\sqrt{\pv})
\end{equation}
is the matrix \ac{MSE} and
\begin{equation}\label{eq:Jsurr_stem}
    \mathcal{J}(\Fm;\pv,\Um,\Omegam) = \tr\!\bigl(\Omegam[\Em\Em^\HH+\sigma^2\Um^\HH\Um]\bigr)-\log\det\Omegam
\end{equation}
is the matrix-form \ac{WMMSE} surrogate of~\eqref{eq:wmmse_stem_obj}.

To solve~\eqref{eq:F_subprob_stem}, we exploit that $\St(N,K)$ is a compact embedded Riemannian submanifold of $\mathbb{C}^{N\times K}$~\cite{Edelman1998Geometry,Absil2008Optimization}. Its tangent space at $\Fm\in\St(N,K)$ is
\begin{equation}\label{eq:Stiefel_tangent}
    T_\Fm\St(N,K) = \{\Xim\in\mathbb{C}^{N\times K}:\Fm^\HH\Xim+\Xim^\HH\Fm=\Zerom\}.
\end{equation}
The orthogonal projection of an ambient matrix $\Xm\in\mathbb{C}^{N\times K}$ onto $T_\Fm\St(N,K)$ is
\begin{equation}\label{eq:Stiefel_proj}
    \mathrm{P}_\Fm(\Xm) = \Xm-\frac{1}{2}\Fm\,\left(\Fm^\HH\Xm+\Xm^\HH\Fm\right),
\end{equation}
and the \ac{QR}-retraction
\begin{equation}\label{eq:Stiefel_retr}
    R_\Fm(\Xim) = \mathrm{qf}(\Fm+\Xim)
\end{equation}
maps a tangent vector back onto $\St(N,K)$, where $\mathrm{qf}(\Am)$ denotes the orthonormal $Q$ factor of the thin \ac{QR} factorization of $\Am$~\cite{Absil2008Optimization}. The Wirtinger Euclidean gradient of~\eqref{eq:Jsurr_stem} with respect to $\Fm$ is
\begin{equation}\label{eq:Egrad_F}
    \nabla_\Fm\mathcal{J} = -\Hm\,\Um\,\Omegam\,\Em\,\diag(\sqrt{\pv}),
\end{equation}
and the Riemannian gradient is its tangent-space projection,
\begin{equation}\label{eq:Rgrad_F}
    \mathrm{grad}\,\mathcal{J}(\Fm) = \mathrm{P}_\Fm(\nabla_\Fm\mathcal{J}).
\end{equation}
At iteration $t$ of the inner loop, we update $\Fm^{(t)}$ by applying the \ac{QR}-retraction along a Polak--Ribi\`ere \ac{CG} direction, i.e.,
\begin{equation}\label{eq:F_iter}
\Fm^{(t+1)} = R_{\Fm^{(t)}}\!\bigl(\alpha^{(t)} d^{(t)}\bigr),
\end{equation}
where the search direction is given by
\begin{equation}
d^{(t)} = -\mathbf{g}^{(t)}
+ \beta^{(t-1)} T_{(t-1)\to t}\bigl(d^{(t-1)}\bigr),
\end{equation}
with $d^{(0)}=-\mathbf{g}^{(0)}$ and
$\mathbf{g}^{(t)}\triangleq\mathrm{grad}\,\mathcal{J}(\Fm^{(t)})$ defined in~\eqref{eq:Rgrad_F}. The vector transport is implemented by projection onto the tangent space at the new point,
\begin{equation}
T_{(t-1)\to t}(\Xim)\triangleq \mathrm{P}_{\Fm^{(t)}}(\Xim),
\end{equation}
where $\mathrm{P}_{\Fm^{(t)}}(\cdot)$ is given in~\eqref{eq:Stiefel_proj}. The Polak--Ribi\`ere coefficient is computed as
\begin{equation}
\beta^{(t-1)}
=
\frac{
\left\langle
\mathbf{g}^{(t)},
\mathbf{g}^{(t)}
-
T_{(t-1)\to t}\bigl(\mathbf{g}^{(t-1)}\bigr)
\right\rangle_F
}{
\left\langle
\mathbf{g}^{(t-1)},\mathbf{g}^{(t-1)}
\right\rangle_F
},
\end{equation}
where $\langle\Am,\Bm\rangle_F\triangleq\Re\{\tr(\Am^\HH\Bm)\}$ denotes the real Frobenius inner product. We restart the \ac{CG} direction by setting $\beta^{(t-1)}=0$ whenever the above value is negative. The step size $\alpha^{(t)}>0$ is selected by an Armijo--Wolfe line search applied to
$\alpha\mapsto\mathcal{J}(R_{\Fm^{(t)}}(\alpha d^{(t)}))$~\cite{Absil2008Optimization}. The inner loop terminates when $\|\mathbf{g}^{(t)}\|_F<\tau$.

This template (tangent projection, retraction, line search) was used in~\cite{Yu2016AltMin} on the complex-circle product manifold for unit-modulus hybrid precoding, adapted here to the Stiefel manifold.

\begin{proposition}[Convergence of the $\Fm$-update]\label{prop:F_update_convergence}
For fixed $(\pv,\uv,\omegav)$, the cost $\mathcal{J}(\cdot;\pv,\Um,\Omegam)$ in~\eqref{eq:F_subprob_stem} is smooth on the compact Riemannian manifold $\St(N,K)$. The Riemannian inner step with Armijo--Wolfe line search converges to a Riemannian stationary point of~\eqref{eq:F_subprob_stem} for every $K\le N$, by the global convergence theory of line-search methods on Riemannian manifolds~\cite{Absil2008Optimization}, with no case split between the free regime ($N\ge 2K$) and the binding regime ($K\le N<2K$).
\end{proposition}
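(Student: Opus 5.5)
The proof will establish smoothness and compactness directly, and then invoke the standard global convergence result for Riemannian line-search methods~\cite{Absil2008Optimization}, checking its hypotheses one by one.

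\emph{Smoothness.} Expanding~\eqref{eq:Jsurr_stem} with~\eqref{eq:E_def_stem} shows that, for fixed $(\pv,\Um,\Omegam)$, $\mathcal{J}$ is a real-valued quadratic polynomial in the real and imaginary parts of $\Fm$: it is $\tr(\Omegam\Em\Em^\HH)$ plus a constant, and $\Em$ is affine in $\Fm$. Hence $\mathcal{J}$ is $C^\infty$ on the ambient space $\mathbb{C}^{N\times K}\cong\mathbb{R}^{2NK}$, and its restriction to the embedded submanifold $\St(N,K)$ is smooth. The Riemannian gradient is $\mathrm{P}_\Fm(\nabla_\Fm\mathcal{J})$ in~\eqref{eq:Rgrad_F}, using the Wirtinger gradient~\eqref{eq:Egrad_F} up to the usual factor convention.

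\emph{Compactness.} $\St(N,K)$ is closed, as the preimage of $\Id_K$ under the continuous map $\Fm\mapsto\Fm^\HH\Fm$, and bounded, since $\|\Fm\|_F^2=K$. It is therefore compact for every $K\le N$. Two consequences follow:
\begin{itemize}
\item $\mathcal{J}$ is bounded below on $\St(N,K)$;
\item the sublevel set $\{\Fm:\mathcal{J}(\Fm)\le\mathcal{J}(\Fm^{(0)})\}$ is compact, so the iterates have accumulation points.
\end{itemize}
Neither property refers to the relation between $N$ and $2K$. This is the source of the claimed absence of a case split: the free/binding distinction of Proposition~\ref{prop:phase} concerns the $\Ym$-coordinates and the rank constraint, not the manifold on which the inner step operates.

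\emph{Convergence theorem.} I will then check the hypotheses of the Riemannian line-search convergence theorem (Absil et al., Thm.~4.3.1 and its Wolfe-type variants).
\begin{itemize}
\item The QR retraction~\eqref{eq:Stiefel_retr} is a valid smooth retraction on $\St(N,K)$. It is well defined because $\Fm+\Xim$ has full column rank for tangent $\Xim$: $(\Fm+\Xim)^\HH(\Fm+\Xim)=\Id_K+\Xim^\HH\Xim\succ\Zerom$.
\item The search directions are gradient-related. Each $d^{(t)}$ satisfies $\langle \mathbf{g}^{(t)},d^{(t)}\rangle_F<0$ with a uniform angle bound along convergent subsequences, and the Armijo--Wolfe step yields sufficient decrease.
\end{itemize}
Given these, every accumulation point of $\{\Fm^{(t)}\}$ is a critical point, i.e., $\mathrm{grad}\,\mathcal{J}=\Zerom$. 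With compactness this gives $\|\mathbf{g}^{(t)}\|_F\to0$ along a subsequence, so the stopping rule $\|\mathbf{g}^{(t)}\|_F<\tau$ is met in finitely many steps.

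\emph{Main obstacle.} The hard step is the gradient-relatedness of the Polak--Ribi\`ere direction with projection transport: PR-CG need not produce descent directions in general. My plan is to argue one of two things:
\begin{itemize}
\item the strong Wolfe conditions, together with the PR$^+$ restart ($\beta\leftarrow0$ when negative), guarantee the sufficient-descent property via a Zoutendijk-type condition on the compact manifold, where the Lipschitz constants of $\mathrm{grad}\,\mathcal{J}$ and of the retraction are uniformly bounded;
\item or, as a fallback, a safeguard resets $d^{(t)}=-\mathbf{g}^{(t)}$ whenever $\langle\mathbf{g}^{(t)},d^{(t)}\rangle_F>-c\|\mathbf{g}^{(t)}\|_F^2$.
\end{itemize}
The safeguard makes the directions gradient-related by construction, so the theorem then applies directly.
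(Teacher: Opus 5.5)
Your plan follows the paper's route. The paper gives no argument beyond the statement itself: $\mathcal{J}$ is smooth, $\St(N,K)$ is compact for every $K\le N$ (so nothing depends on $N$ versus $2K$), and the Riemannian line-search theory of~\cite{Absil2008Optimization} is invoked. Your smoothness, compactness, and QR-retraction checks are correct. Your conclusion (every accumulation point is critical, $\|\mathbf{g}^{(t)}\|_F\to 0$, and the $\tau$-test terminates in finitely many steps) is the precise form of the paper's ``converges to a stationary point''. You are also more careful than the paper in noticing that the cited theory needs gradient-related directions, which Polak--Ribi\`ere directions are not automatically.

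The step that still needs repair is how you close that gap.

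\textbf{First route.} It does not work as stated. There is no general result that strong Wolfe steps plus the PR$^+$ truncation produce sufficient descent. Gilbert and Nocedal's Euclidean PR$^+$ theorem assumes sufficient descent as an extra condition that the line search must enforce, and PR-type directions need not be descent directions under inexact Wolfe steps. A Zoutendijk condition presupposes descent rather than producing it. Moreover, the paper's line search is Armijo--Wolfe, not strong Wolfe.

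\textbf{Safeguarded restart.} This route does work, but it only secures the $\limsup\langle\mathbf{g}^{(t)},d^{(t)}\rangle_F<0$ half of the gradient-relatedness definition in~\cite{Absil2008Optimization}. The other half, boundedness of $\{d^{(t)}\}$ along subsequences converging to a non-critical point, is not guaranteed for PR$^+$, because $\beta^{(t-1)}$ carries the factor $1/\|\mathbf{g}^{(t-1)}\|_F^2$. Add a second reset, e.g.\ whenever $\|d^{(t)}\|_F>C\|\mathbf{g}^{(t)}\|_F$, so that the angle between $d^{(t)}$ and $-\mathbf{g}^{(t)}$ stays bounded away from $\pi/2$.

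\textbf{Line-search framework.} Theorem~4.3.1 there is stated for the Armijo-comparison framework. You should either check that the Wolfe step achieves a fixed fraction of the Armijo decrease, or use a Riemannian Zoutendijk theorem. The latter's Lipschitz-type hypothesis on the pullback $\mathcal{J}\circ R_{\Fm}$ should be verifiable on the compact $\St(N,K)$ with the globally defined QR retraction.

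What you then prove is convergence of the safeguarded variant. This is what standard Riemannian CG implementations run, and it is what the paper's one-line appeal to~\cite{Absil2008Optimization} tacitly requires.
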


\subsection{Overall Algorithm, Convergence, and Complexity}
\label{sec:stem_overall}

The complete alternating procedure, which combines the \ac{WMMSE} framework with the Riemannian inner step for solving~\eqref{eq:Pstem}, is summarized in Algorithm~\ref{alg:stem}.

\begin{algorithm}[t]
\caption{Proposed optimization scheme for the \ac{SC} MiLAC-aided beamforming problem~\eqref{eq:Pstem}.}
\label{alg:stem}
\footnotesize
\begin{algorithmic}[1]
\State \textbf{Input:} channel $\Hm$, power budget $P_T$, noise $\sigma^2$, inner tolerance $\tau$, outer tolerance $\epsilon$.
\State Initialize $\Fm^{(0)}\in\St(N,K)$ and $\pv^{(0)}$ to a feasible starting point.
\Repeat
\State Update $u_k$ via~\eqref{eq:u_update_stem} and $\omega_k$ via~\eqref{eq:omega_update_stem} for $k=1,\ldots,K$.
\State Update $\pv$ via~\eqref{eq:p_update_stem} with bisection on $\mu_p$.
\State Solve the $\Fm$-subproblem~\eqref{eq:F_subprob_stem} by iterating the Polak--Ribière \ac{CG} update~\eqref{eq:F_iter} with Armijo--Wolfe line search, until $\|\mathrm{grad}\,\mathcal{J}\|_F<\tau$.
\Until{$|g^{(n)}-g^{(n-1)}|/|g^{(n-1)}|\le\epsilon$, where $g^{(n)}$ is the \ac{WMMSE} objective~\eqref{eq:wmmse_stem_obj} at iteration $n$.}
\State Synthesise $\bv^\star\gets\Psi(\Fm^\star)$ via~\cite[Alg.~1]{Nerini2026Stem} and \cite[eq.~(7)]{Nerini2025AnalogI}.
\State \textbf{Return} $(\Fm^\star,\pv^\star,\bv^\star)$.
\end{algorithmic}
\end{algorithm}

\begin{proposition}[Convergence of Algorithm~\ref{alg:stem}]\label{prop:alg1_conv}
The sum-rate sequence $\{\mathcal{R}(\Fm^{(n)}\diag(\sqrt{\pv^{(n)}}))\}_{n\ge 0}$ produced by Algorithm~\ref{alg:stem} is non-decreasing and bounded above by the fully digital optimum, hence convergent. Moreover, every accumulation point is a stationary point of problem~\eqref{eq:Pstem} in the block-coordinate sense, with $\Fm^\infty$ a Riemannian stationary point of $\mathcal{J}(\cdot,\pv^\infty;\Um^\infty,\Omegam^\infty)$ on $\St(N,K)$ and $\pv^\infty$ the per-stream closed-form minimiser~\eqref{eq:p_update_stem} of $\mathcal{J}$ at $\Fm^\infty$. 
\end{proposition}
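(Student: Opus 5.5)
The plan is the standard block-coordinate-descent argument for WMMSE, adapted to an inexact Riemannian $\Fm$-block. Define the augmented objective $G(\Fm,\pv,\uv,\omegav)\triangleq\sum_{k}\bigl(\omega_k e_k(\Fm,\pv)-\ln\omega_k\bigr)$ of~\eqref{eq:wmmse_stem_obj}. Identity~\eqref{eq:wmmse_identity_stem} gives
\begin{equation*}
\min_{\uv,\omegav}G(\Fm,\pv,\uv,\omegav)=K-\ln 2\,\mathcal{R}\bigl(\Fm\diag(\sqrt{\pv})\bigr).
\end{equation*}
Moreover, the minimum is attained exactly at the closed forms~\eqref{eq:u_update_stem}--\eqref{eq:omega_update_stem}. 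Monotonicity of the sum-rate therefore reduces to showing that each block update of Algorithm~\ref{alg:stem} does not increase $G$.

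\emph{Monotonicity.} Within one outer iteration, the $(\uv,\omegav)$ update is an exact minimization, so $G$ does not increase. The $\pv$ update is the exact minimizer of a convex separable problem in $\sqrt{p_k}$, obtained from the KKT conditions with bisection on $\mu_p$, so again $G$ does not increase. For the $\Fm$ update, I will check that, with $(\pv,\uv,\omegav)$ fixed, $G$ and the surrogate $\mathcal{J}$ in~\eqref{eq:Jsurr_stem} differ by a constant. Each Armijo step then gives
\begin{equation*}
\mathcal{J}\bigl(R_{\Fm}(\alpha d)\bigr)\le\mathcal{J}(\Fm)+c_1\alpha\langle\mathbf{g},d\rangle_F,
\end{equation*}
which is strict descent provided $d$ is a descent direction. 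That is guaranteed by the strong Wolfe condition together with the PR$^+$ restart (fall back to $-\mathbf{g}$ if needed). Chaining these, $G$ is non-increasing along outer iterations. Since the next $(\uv,\omegav)$ minimization re-evaluates $G$ at its rate-equivalent value, $\mathcal{R}^{(n+1)}\ge\mathcal{R}^{(n)}$. Boundedness follows because $\Fm\diag(\sqrt{\pv})$ with $\Fm\in\St(N,K)$ has $\tr(\Wm\Wm^\HH)=\onev_K^\TT\pv\le P_T$. It is therefore feasible for~\eqref{eq:Pdig}, and the rate is at most the digital optimum. Monotone convergence then finishes the first claim.

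\emph{Stationarity of accumulation points.} The iterates $(\Fm^{(n)},\pv^{(n)})$ lie in the compact set $\St(N,K)\times\{\pv\ge 0,\onev^\TT\pv\le P_T\}$. The $(\uv,\omegav)$ iterates are continuous functions of these and stay bounded, with $\omega_k\ge 1$ and $\omega_k$ bounded above because $e_k\ge\sigma^2/(\cdot)>0$. So accumulation points exist. Take a convergent subsequence with limit $(\Fm^\infty,\pv^\infty,\uv^\infty,\omegav^\infty)$. Continuity of the closed-form maps and the convergence of $G$ give that $\uv^\infty,\omegav^\infty$ are the closed-form minimizers at $(\Fm^\infty,\pv^\infty)$. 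For $\pv$, I will argue that the $\pv$-map is continuous in $(\Fm,\uv,\omegav)$, since $\mu_p$ is the unique root of a monotone function. Because the $G$-decrease tends to zero, the fixed-point relation $\pv^\infty=$ \eqref{eq:p_update_stem} evaluated at the limit holds. Uniqueness of this minimizer follows from strict convexity in $\sqrt{p_k}$ when $b_k>0$.

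\emph{Main obstacle.} The hardest step is stationarity of $\Fm^\infty$, because the $\Fm$ block is solved inexactly by an iterative Riemannian method, not by an exact minimizer. First, I would invoke Proposition~\ref{prop:F_update_convergence}: if each inner loop is run to its stationary limit, or to tolerance $\tau_n\to 0$, then $\|\mathrm{grad}\,\mathcal{J}(\Fm^{(n+1)};\pv^{(n)},\Um^{(n)},\Omegam^{(n)})\|_F\le\tau_n$. Next, the Riemannian gradient~\eqref{eq:Rgrad_F} is jointly continuous in $(\Fm,\pv,\Um,\Omegam)$ via~\eqref{eq:Egrad_F} and~\eqref{eq:Stiefel_proj}. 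Passing to the limit along the subsequence then requires $(\pv^{(n)},\Um^{(n)},\Omegam^{(n)})$ and $(\pv^{(n+1)},\ldots)$ to share the same limit. I would secure that by the sufficient-decrease bounds of the exact blocks: strong convexity in $\uv$ and $\omegav$ at positive noise, and in $\sqrt{\pv}$ on the active set, force successive differences to vanish, as in the BSUM analysis. If this sufficient-decrease argument proves delicate, the fallback is to state the result with the fixed tolerance $\tau$, giving a $\tau$-approximate Riemannian stationary point that becomes exact as $\tau\to 0$.
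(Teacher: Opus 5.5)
Your argument is correct and follows the same skeleton as the paper's proof: WMMSE rate equivalence for monotonicity, a bound on the rate, and Proposition~\ref{prop:F_update_convergence} for the $\Fm$-block. Where it differs is that it proves what the paper's short proof only asserts.

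The paper lists each block as an exact minimizer or stationary solve. It calls the $\pv$-block ``the unique sum-rate maximizer at fixed $\Fm$''; your description, the exact minimizer of the surrogate at fixed $(\Fm,\uv,\omegav)$, is the accurate one, since~\eqref{eq:p_update_stem} depends on $\uv$ and $\omegav$. It bounds the rate only by compactness, and it never passes to the limit along a subsequence.

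Your version adds three things. First, you get the bound by the digital optimum through feasibility of $\Fm\diag(\sqrt{\pv})$ for~\eqref{eq:Pdig}, which is exactly what the statement asserts. Second, you state an explicit descent requirement on the warm-started inner loop. Your fallback to $-\mathbf{g}$ is genuinely needed, not optional: PR$^+$ with a Wolfe search need not produce a descent direction, and an Armijo step with $\langle\mathbf{g},d\rangle_F>0$ could increase $\mathcal{J}$. Third, you observe that the fixed inner tolerance $\tau$ of Algorithm~\ref{alg:stem} only yields $\tau$-approximate Riemannian stationarity, so the exact claim requires $\tau_n\to 0$. The paper's proof does not address this caveat.

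Your self-identified main obstacle also closes more easily than you expect. With the algorithm's block order, the power is refreshed before $\Fm$, so the inner loop certifies $\|\mathrm{grad}\,\mathcal{J}(\Fm^{(n+1)};\pv^{(n+1)},\Um^{(n)},\Omegam^{(n)})\|_F<\tau_n$. Along $(\Fm^{(n_j+1)},\pv^{(n_j+1)})\to(\Fm^\infty,\pv^\infty)$ you then only need $\uv^{(n+1)}-\uv^{(n)}\to\Zerom$ and $\omegav^{(n+1)}-\omegav^{(n)}\to\Zerom$.

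These follow from the summable block decreases once you split the $(\uv,\omegav)$ refresh into two steps, since $G$ is not jointly convex in $(u_k,\omega_k)$. The first is a $\uv$-step at the old weights, whose quadratic coefficient is at least $\sigma^2$ because $\omega_k\ge 1$. The second is an $\omegav$-step, which is strongly convex on $[1,\omega_{\max}]$. No uniform strong convexity in $\sqrt{\pv}$ is needed, because your uniqueness-at-the-limit argument already gives the $\pv$ fixed point.
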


\begin{proof}
The $(u_k,\omega_k)$-blocks are the unique closed-form minimizers of $\mathcal{J}$ at fixed $(\Fm,\pv)$, and the \ac{MSE}-rate equivalence of~\cite{Shi2011WMMSE} converts a non-increase of $\mathcal{J}$ in $(\Fm,\pv)$ between consecutive $(u_k,\omega_k)$-refreshes into a non-decrease of the sum-rate. The $\pv$-block is the unique sum-rate maximizer at fixed $\Fm$. The $\Fm$-block converges to a Riemannian stationary point of its subproblem by Proposition~\ref{prop:F_update_convergence}. Compactness of $\St(N,K)\times\{\pv\ge\Zerom,\onev_K^\TT\pv\le P_T\}$ bounds the sum-rate sequence, and monotonicity plus boundedness yields convergence. 
\end{proof}

\begin{proposition}[Computational complexity of Algorithm~\ref{alg:stem}]\label{prop:complexity}
The per-outer-iteration cost is dominated by the $\Fm$-update, whose per-Riemannian-step cost $\mathcal{O}(NK^2)$ comes from the gradient assembly~\eqref{eq:Egrad_F}, the tangent-space projection~\eqref{eq:Stiefel_proj}, and the \ac{QR}-retraction~\eqref{eq:Stiefel_retr}. The $(u_k,\omega_k)$ closed-forms and the bisection-based $\pv$-update collectively cost $\mathcal{O}(NK^2)$ per outer iteration, dominated by the matrix product $\Hm^\HH\Fm$ shared by~\eqref{eq:u_update_stem}, \eqref{eq:omega_update_stem}, and~\eqref{eq:p_update_stem} (and reusable from the last $\Fm$-step). The one-shot susceptance synthesis $\bv^\star=\Psi(\Fm^\star)$ has two stages: The first stage constructs $\Bm^\star$ from $\Fm^\star$ via~\cite[Alg.~1]{Nerini2026Stem} at cost $\mathcal{O}(NK^2)$. The second stage extracts the tunable-component vector $\bv^\star$ from $\Bm^\star$ via~\cite[eq.~(7)]{Nerini2025AnalogI} at cost $\mathcal{O}((N+K)^2)$.
\end{proposition}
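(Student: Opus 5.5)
The plan is to bound the cost of each block of one outer iteration of Algorithm~\ref{alg:stem} separately, using the standing assumption $K\le N$. The bound will be expressed per Riemannian inner step, since the number of inner and outer iterations is not controlled by the statement. I would first fix the cheap shared quantity $\Gm\triangleq\Hm^\HH\Fm\in\mathbb{C}^{K\times K}$, which costs $\mathcal{O}(NK^2)$. From $\Gm$, every scalar $\hv_k^\HH\fv_j$ is available by lookup. Then \eqref{eq:u_update_stem}, \eqref{eq:omega_update_stem}, \eqref{eq:a_stem} and \eqref{eq:b_stem} each cost $\mathcal{O}(K)$ per user, so $\mathcal{O}(K^2)$ in total. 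Each bisection step on $\mu_p$ in \eqref{eq:p_update_stem} costs $\mathcal{O}(K)$, and a fixed number of bisection steps (set by tolerance) adds $\mathcal{O}(K)$ up to a logarithmic factor. The non-$\Fm$ blocks are therefore $\mathcal{O}(NK^2)$, dominated by forming $\Gm$.

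For the $\Fm$-block I would count one Riemannian CG step term by term. The matrix MSE $\Em$ in \eqref{eq:E_def_stem} requires $\Gm$ and diagonal scalings, which is $\mathcal{O}(NK^2+K^2)$. The Euclidean gradient \eqref{eq:Egrad_F} is $\Hm$ times a $K\times K$ product, giving $\mathcal{O}(K^3+NK^2)$. The projection \eqref{eq:Stiefel_proj} needs $\Fm^\HH\Xm$, at $\mathcal{O}(NK^2)$, followed by $\Fm$ times a $K\times K$ matrix, also $\mathcal{O}(NK^2)$. The vector transport is another such projection. The Polak--Ribi\`ere inner products cost $\mathcal{O}(NK)$. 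The thin QR in \eqref{eq:Stiefel_retr} of an $N\times K$ matrix costs $\mathcal{O}(NK^2)$ via Householder.

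Each Armijo--Wolfe trial requires one retraction and one evaluation of $\mathcal{J}$. Evaluating $\mathcal{J}$ means re-forming $\Hm^\HH\Fm$ and $\Em$, so each trial is $\mathcal{O}(NK^2)$. Assuming a bounded number of trials per step, the per-step cost is $\mathcal{O}(NK^2)$. Since $K\le N$, the $K^3$ terms are absorbed into it.

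For the synthesis, I would invoke Proposition~\ref{prop:stem}, which already states that $\Psi$ runs in $\mathcal{O}(NK^2)$ via \cite[Alg.~1]{Nerini2026Stem}; this gives the first stage. For the second stage, I would argue from \eqref{eq:Bmat_from_phys} and its inversion. Each off-diagonal physical $B_{i,k}=-[\Bm]_{i,k}$ is a copy. Each port-to-ground element is $B_{k,k}=\sum_p[\Bm]_{p,k}$, a column sum. So the stage scans all $(N+K)^2$ entries of $\Bm^\star$ once, costing $\mathcal{O}((N+K)^2)$.

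The main obstacle is stating rigorously that the $\Fm$-update dominates per outer iteration. This requires bounding the number of inner CG steps and line-search trials, which are not $\mathcal{O}(1)$ a priori. I would resolve this by phrasing the claim per Riemannian step, as the statement does, and treating line-search trials as a bounded constant factor.
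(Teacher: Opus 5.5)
Your proposal is correct and follows the same route as the paper. The paper gives no separate proof: its justification is the block-by-block accounting written into the statement, namely the shared $\Hm^\HH\Fm$ at $\mathcal{O}(NK^2)$, gradient, projection and QR at $\mathcal{O}(NK^2)$ per Riemannian step, the cited Alg.~1 for the first synthesis stage, and an $\mathcal{O}((N+K)^2)$ extraction for the second. Your term-by-term count, including the column-sum inversion $B_{k,k}=\sum_p[\Bm]_{p,k}$, simply fills in those details. One small correction to your ``main obstacle'': the domination claim needs no upper bound on the number of inner steps, because checking the termination test already requires the first Riemannian gradient, which costs $\mathcal{O}(NK^2)$ and so matches all the other blocks combined.
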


%% ======================================================================
\section{Stem-Connected MiLAC-aided MU-MISO Beamforming: Hardware-Compliant Case}
\label{sec:synthesis}

Building on the ideal continuous-susceptance theory of Section~\ref{sec:ideal}, we now address the practical case in which each tunable susceptance is restricted to a bounded and discretized interval. After formalizing this constraint as a finite susceptance grid (Section~\ref{sec:hw_model}), we propose two synthesis algorithms with complementary complexity-versus-performance trade-offs: a projected closed-form baseline (Section~\ref{sec:synth_baseline}) and an alternating per-element refinement (Section~\ref{sec:synth_ao}).

Both algorithms and their convergence analyses are stated once for a generic tunable component count $N_C\in\{N_C^{\mathrm{stem}},N_C^{\mathrm{full}}\}$. The architecture enters only through the ideal-case solver that produces $\Fm^\star$ and the closed-form synthesis $\Psi:\Fm^\star\mapsto\bv^\star\in\mathbb{R}^{N_C}$ with $\Phi(\bv^\star)=\Fm^\star$. For \ac{SC}-\ac{MiLAC}, this pairs Algorithm~\ref{alg:stem} with the synthesis of~\cite[Alg.~1]{Nerini2026Stem}, and for \ac{FC}-\ac{MiLAC} it pairs the spectral-ball solver of~\cite[Alg.~2]{Wu2026MiLAC} with the symmetric-unitary completion of~\cite[Prop.~1]{Wu2026MiLAC}, both followed by the physical-component extraction of~\cite[eq.~(7)]{Nerini2025AnalogI}.

\subsection{Bounded-and-Discrete Susceptance Grid}
\label{sec:hw_model}

In practice, each tunable susceptance is implemented by a varactor or switched-capacitor cell with a finite adjustment range and a finite bias-voltage resolution, so the realizable values lie in a bounded interval and, typically, on a uniform discrete grid. We model this with two scalar parameters.

\begin{definition}[Hardware susceptance grid]
\label{def:grid}
Fix a dynamic range $B>0$ and a resolution-level count $L\in\mathbb{N}$, $L\ge 2$. The \emph{hardware susceptance grid} is
\begin{equation}\label{eq:grid}
    \mathcal{B}_{B,L} \triangleq \left\{-B + \tfrac{2B(\ell-1)}{L-1} : \ell = 1,2,\ldots,L\right\}\!\subset\![-B,B].
\end{equation}
The associated \emph{bounded range} is $\mathcal{B}_B \triangleq [-B,B]$, recovered from~\eqref{eq:grid} in the continuous limit $L\to\infty$.
\end{definition}

We write $q\triangleq\log_2 L$ for the resolution in bits. The spacing between adjacent grid points is $\Delta_{B,L}\triangleq 2B/(L-1)$. The associated entrywise projection onto the grid is
\begin{equation}\label{eq:proj_grid}
    \mathrm{proj}_{\mathcal{B}_{B,L}}(b)\triangleq\arg\min_{g\in\mathcal{B}_{B,L}} |g-b|,\qquad b\in\mathbb{R},
\end{equation}
that is, the operation of clipping $b$ to $[-B,B]$ and then rounding to the nearest grid point. We write $\mathrm{proj}_{\mathcal{B}_{B,L}}(\bv)$ for the entrywise application of this map to a vector. In the continuous limit $\Delta_{B,L}\to 0$ at fixed $B$, $\mathcal{B}_{B,L}$ densifies into the bounded range $\mathcal{B}_B$. In the joint limit $\Delta_{B,L}\to 0$ and $B\to\infty$, $\mathcal{B}_{B,L}$ becomes dense in $\mathbb{R}$ and the ideal case of Section~\ref{sec:ideal} is recovered. The hardware-compliant tunable-parameter set is $\mathcal{B}_{B,L}^{N_C}$ for the architecture's susceptance count $N_C\in\{N_C^{\mathrm{stem}},N_C^{\mathrm{full}}\}$, and the associated realizable-beamformer set is the image $\Phi(\mathcal{B}_{B,L}^{N_C})$. The hardware-compliant \ac{MiLAC}-aided \ac{MU-MISO} beamforming optimization problem is
\begin{equation}\label{eq:P_hw}
\begin{aligned}
\mathcal{P}_{B,L}: \max_{\bv\in\mathcal{B}_{B,L}^{N_C},\,\pv\in\mathbb{R}_+^K}\  &\mathcal{R}\!\bigl(\Phi(\bv)\diag(\sqrt{\pv})\bigr)\\
\mathrm{s.t.}\  &\onev_K^\TT\pv\le P_T.
\end{aligned}
\end{equation}
In the joint limit $B\to\infty$ and $\Delta_{B,L}\to 0$, \eqref{eq:P_hw} reduces to the architecture's ideal-case problem ($\mathcal{P}_{\mathrm{stem}}$ of~\eqref{eq:Pstem} for \ac{SC}-\ac{MiLAC}, $\mathcal{P}_{\mathrm{full}}$ of~\eqref{eq:Pfull} for \ac{FC}-\ac{MiLAC}). Three knobs parameterize the problem: the architecture (\ac{FC} versus \ac{SC}, controlling $N_C$), the dynamic range $B$, and the resolution bits $q$.

\subsection{Post-Hoc Projection Baseline}
\label{sec:synth_baseline}

\begin{algorithm}[t]
\caption{\ac{PHP} synthesis for problem \eqref{eq:P_hw}.}
\label{alg:proj}
\footnotesize
\begin{algorithmic}[1]
\State \textbf{Input:} channel $\Hm$, power budget $P_T$, grid $\mathcal{B}_{B,L}$.
\State Obtain the ideal-case $\bv^\star$ via the architecture's solver: Algorithm~\ref{alg:stem} for \ac{SC}-\ac{MiLAC}, \cite[Alg.~2]{Wu2026MiLAC} for \ac{FC}-\ac{MiLAC} (followed by tunable-component extraction \cite[eq.~(7)]{Nerini2025AnalogI}).
\State Project each entry: $b_i^{\mathrm{q}}\gets\mathrm{proj}_{\mathcal{B}_{B,L}}(b_i^\star)$ for $i=1,\ldots,N_C$.
\State Form $\Fm^{\mathrm{q}}\gets\Phi(\bv^{\mathrm{q}})$ and keep $\pv^\star$ unchanged.
\State \textbf{Return} feasible $(\bv^{\mathrm{q}},\pv^\star)$.
\end{algorithmic}
\end{algorithm}

Algorithm~\ref{alg:proj} is the \ac{PHP} baseline: it solves the ideal continuous-susceptance problem, rounds each susceptance to the nearest grid point, and maps the rounded vector $\bv^{\mathrm{q}}$ through the Cayley transform to obtain the realized beamformer $\Fm^{\mathrm{q}}=\Phi(\bv^{\mathrm{q}})$, keeping the ideal power allocation $\pv^\star$. The projection is applied after the ideal solve, hence ``post-hoc'', at negligible cost $\mathcal{O}(N_C)$. Since the rounding ignores its effect on the sum-rate, the resulting loss can be large, which is quantified in Section~\ref{sec:sim} and partly recovered by the \ac{AR} of Section~\ref{sec:synth_ao}.

\subsection{Per-Element Alternating Refinement}
\label{sec:synth_ao}

We now upgrade the architecture's ideal-case solver to operate directly under the bounded-and-discrete susceptance constraint of~\eqref{eq:P_hw}. The finite set $\mathcal{B}_{B,L}^{N_C}$ rules out the continuous-domain $\Fm$ updates of Section~\ref{sec:ideal}, while a direct relaxation followed by \ac{PHP} is exactly the sum-rate-blind baseline of Algorithm~\ref{alg:proj}. We therefore retain the \ac{WMMSE} framework of Section~\ref{sec:wmmse}, with $\Fm$ replaced by $\Phi(\bv)$ and the architecture's continuous constraint by $\bv\in\mathcal{B}_{B,L}^{N_C}$, and replace only the $\Fm$-update with a per-element grid scan in $\bv$. The $u_k$, $\omega_k$, and $\pv$ updates~\eqref{eq:u_update_stem}--\eqref{eq:p_update_stem} of Section~\ref{sec:wmmse} apply without modification because none of them depends on $\Phi$ explicitly, treating $\Fm=\Phi(\bv)$ as the input. The remaining task, addressed next, is the discrete $\bv$-update.

\subsubsection{Update of $\bv$ over Susceptance Grid}
\label{sec:ao_b_scan}

For fixed $(\pv,\uv,\omegav)$, the $\bv$-update subproblem corresponding to the $\Fm$-subproblem~\eqref{eq:F_subprob_stem} of Section~\ref{sec:stem_F_update} is
\begin{subequations}\label{eq:b_subprob}
\begin{align}
    \min_{\bv}\quad & \mathcal{J}\bigl(\Phi(\bv);\pv,\Um,\Omegam\bigr) \label{eq:b_subprob_obj}\\
    \text{s.t.}\quad & \bv\in\mathcal{B}_{B,L}^{N_C}, \label{eq:b_subprob_c}
\end{align}
\end{subequations}
with $\mathcal{J}$ defined in~\eqref{eq:Jsurr_stem}. Problem~\eqref{eq:b_subprob} is an integer program, since $\bv$ lives on the finite grid, which rules out any continuous-domain update. We solve~\eqref{eq:b_subprob} by per-element grid scan: at each step, a single susceptance $b_i$ is updated over its $L$-point grid while the remaining $N_C-1$ entries are held fixed. Let $\bv_{-i}\in\mathbb{R}^{N_C-1}$ denote the susceptance vector with the $i$-th coordinate removed, and write $(\bv_{-i},b)\in\mathbb{R}^{N_C}$ for the full vector obtained by re-inserting the scalar $b$ at position $i$. The single-susceptance section
\begin{equation}\label{eq:section}
    \mathcal{J}_i(b)\triangleq\mathcal{J}\bigl(\Phi(\bv_{-i},b);\pv,\Um,\Omegam\bigr),\quad b\in\mathcal{B}_{B,L},
\end{equation}
is then a rational scalar function on a finite domain of cardinality $L$, and its global minimum is found by enumeration
\begin{equation}\label{eq:b_update}
    b_i\leftarrow\arg\min_{b\in\mathcal{B}_{B,L}}\mathcal{J}_i(b).
\end{equation}
Susceptances are swept cyclically in $i=1,\ldots,N_C$, each scan committed before the next.

\subsubsection{Reduced-Complexity Implementation}
\label{sec:ao_smw}

A direct evaluation of~\eqref{eq:section} would require recomputing
$\Phi(\bv_{-i},b)$ from scratch for every candidate $b\in\mathcal{B}_{B,L}$.
Since $\Phi$ is obtained from the inverse of an $(N+K)\times(N+K)$ matrix through
the Cayley transform in~\eqref{eq:cayley}, this implementation is unnecessarily
expensive. We reduce this cost by observing that changing a single physical
component from $b_i$ to $b$ induces a sparse symmetric rank-one perturbation of
the susceptance matrix $\Bm$. This perturbation can then be propagated to the
resolvent
\[
\Mm \triangleq (Y_0\Id+j\Bm)^{-1}
\]
by the Sherman--Morrison identity, and subsequently to the beamforming matrix
$\Phi$ through the Cayley map. The resulting update only requires rank-one
corrections on cached quantities.

\textit{Rank-one perturbation of $\Bm$:}
From the physical susceptance construction in~\eqref{eq:Bmat_from_phys}, replacing the $i$th tunable component $b_i$ with a candidate $b$ perturbs $\Bm$ by a rank-one symmetric outer product:
\begin{equation}\label{eq:rank1pert}
\Bm'
=
\Bm+\delta_i\vv^{(i)}\vv^{(i)\TT},
\qquad
\delta_i\triangleq b-b_i,
\end{equation}
where
\[
\vv^{(i)}
=
\begin{cases}
\ev_{m_i}-\ev_{n_i}, & b_i \in \mathcal{E},\\
\ev_{k_i}, & b_i \in \mathcal{G},
\end{cases}
\]
$\ev_p\in\mathbb{R}^{N+K}$ is the $p$-th standard basis vector (its $p$-th entry equals one and all others vanish), and $\mathcal{E}$, $\mathcal{G}$ denote the sets of inter-port edge tunables (each associated with an edge $(m_i,n_i)$) and port-to-ground tunables (each associated with a port $k_i$), respectively. Indeed, for an inter-port edge, a unit increase in $b_i$ decreases the off-diagonal entries $[\Bm]_{m_i,n_i}$ and $[\Bm]_{n_i,m_i}$ by one and increases the diagonal entries $[\Bm]_{m_i,m_i}$ and $[\Bm]_{n_i,n_i}$ by one, matching $(\ev_{m_i}-\ev_{n_i})(\ev_{m_i}-\ev_{n_i})^{\TT}$. For a port-to-ground component, only $[\Bm]_{k_i,k_i}$ changes, giving $\ev_{k_i}\ev_{k_i}^{\TT}$. The form~\eqref{eq:rank1pert} applies to both architectures considered in this paper, since they only differ in the set of entries of $\Bm$ that are physically tunable.

\textit{Resolvent update via Sherman--Morrison:}
Using~\eqref{eq:rank1pert}, the perturbed matrix satisfies $Y_0\Id+j\Bm'
=(Y_0\Id+j\Bm)+j\delta_i\vv^{(i)}\vv^{(i)\TT}$.
Applying the Sherman--Morrison identity gives
\begin{equation}\label{eq:smw}
\Mm'
\triangleq
(Y_0\Id+j\Bm')^{-1}
=
\Mm
-
\beta_i
(\Mm\vv^{(i)})(\vv^{(i)\TT}\Mm),
\end{equation}
where
\begin{equation}\label{eq:beta_smw}
\beta_i
\triangleq
\frac{j\delta_i}
{1+j\delta_i\vv^{(i)\TT}\Mm\vv^{(i)}} .
\end{equation}
The update is well defined whenever the denominator in~\eqref{eq:beta_smw} is
nonzero, which corresponds to the perturbed network admittance matrix being
nonsingular. Moreover, since $\Bm$ is real symmetric, $Y_0\Id+j\Bm$ is complex
symmetric, and therefore $\Mm=\Mm^{\TT}$. Consequently,
$\vv^{(i)\TT}\Mm=(\Mm\vv^{(i)})^{\TT}$, so each candidate update only requires
the vector $\Mm\vv^{(i)}$, which is inexpensive to form because $\vv^{(i)}$ is
sparse.

\textit{Beamformer update via the Cayley map:}
The Cayley transform can be rewritten in terms of the resolvent as
\begin{align}
\Thetam
&=
(Y_0\Id-j\Bm)(Y_0\Id+j\Bm)^{-1} =
\bigl(2Y_0\Id-(Y_0\Id+j\Bm)\bigr)\Mm \notag\\
&
=
2Y_0\Mm-\Id_{N+K}.
\label{eq:cayley_resolvent}
\end{align}
Therefore, the effective beamformer is obtained from the off-diagonal block
\begin{equation}\label{eq:Phi_from_M}
\Phi(\bv)
=
[\Thetam]_{K+1:K+N,\,1:K}
=
2Y_0[\Mm]_{K+1:K+N,\,1:K},
\end{equation}
where the corresponding block of $\Id_{N+K}$ is zero. Substituting the
Sherman--Morrison update~\eqref{eq:smw} into~\eqref{eq:Phi_from_M} gives the
candidate beamformer directly as
\begin{equation}\label{eq:Phi_block_update}
\Phi(\bv_{-i},b)\!
=\!
\Phi(\bv)
\!-\!
2Y_0\beta_i
(\Mm\vv^{(i)})_{K\!+\!1:K\!+\!N}
(\vv^{(i)\TT}\Mm)_{1 :K}.
\end{equation}
Thus, for each candidate $b$, $\Phi(\bv_{-i},b)$ is obtained by a rank-one
correction to the cached $N\times K$ block $\Phi(\bv)$, rather than by a fresh
matrix inversion. Substituting~\eqref{eq:Phi_block_update} into the matrix
\ac{MSE} \eqref{eq:E_def_stem}, and then into~\eqref{eq:Jsurr_stem}, yields the candidate objective
$\mathcal{J}_i(b)$ with substantially reduced computational cost.

\begin{remark}[Comparison with discrete-\ac{RIS} alternating optimization]\label{rem:ris_compare}
The per-element grid scan~\eqref{eq:b_update} is the \ac{MiLAC} counterpart of the successive-refinement per-element search developed for discrete-phase \ac{RIS}~\cite{Wu2020DiscreteRIS,Di2020HybridRIS}, in which each diagonal reflection coefficient is updated by enumeration over the discrete phase set. The structural difference is that the \ac{RIS} reflection coefficients enter the cascaded channel linearly as diagonal entries, whereas the \ac{MiLAC} susceptances enter $\Fm$ non-linearly via the matrix inverse $(Y_0\Id+j\Bm)^{-1}$ in the Cayley transform~\eqref{eq:cayley}, so each section evaluation in our setting requires the rank-one Sherman--Morrison resolvent update~\eqref{eq:smw}.
\end{remark}

\subsection{Overall Algorithm and Convergence}
\label{sec:ao_overall}

Algorithm~\ref{alg:ao} follows the same block-coordinate structure as Algorithm~\ref{alg:stem}, but replaces the Riemannian $\Fm$-update with the discrete grid search over the tunable susceptance components in~\eqref{eq:b_update}. It is initialized by the \ac{PHP} baseline through Algorithm~\ref{alg:proj}. Its convergence follows by the same argument as Proposition~\ref{prop:alg1_conv}.

\begin{algorithm}[t]
\caption{Per-element \ac{AR} for problem~\eqref{eq:P_hw}.}
\label{alg:ao}
\footnotesize
\begin{algorithmic}[1]
\State \textbf{Input:} $\Hm$, $P_T$, grid $\mathcal{B}_{B,L}$, max outer iterations $T_{\mathrm{out}}$, tolerance $\epsilon$.
\State Initialise $(\bv^{(0)},\pv^{(0)})\gets(\bv^{\mathrm{q}},\pv^\star)$ from Algorithm~\ref{alg:proj}.
\For{$t=0,1,\ldots,T_{\mathrm{out}}-1$}
\State Update $u_k$, $\omega_k$ via~\eqref{eq:u_update_stem}--\eqref{eq:omega_update_stem} at $\Fm^{(t)}=\Phi(\bv^{(t)})$.
\State Update $\pv^{(t+1)}$ via~\eqref{eq:p_update_stem} with bisection on $\mu_p$.
\For{$i=1,\ldots,N_C$}
\State Set $b_i^{(t+1)}\leftarrow\arg\min_{b\in\mathcal{B}_{B,L}}\mathcal{J}_i(b)$ as in~\eqref{eq:b_update} (evaluating each candidate via the rank-one update~\eqref{eq:smw}--\eqref{eq:Phi_block_update}).
\EndFor
\State \textbf{Break} if $|\mathcal{R}(\Phi^{(t+1)}) - \mathcal{R}(\Phi^{(t)})| < \epsilon$.
\EndFor
\State \textbf{Return} $(\bv^{(t+1)},\pv^{(t+1)})$.
\end{algorithmic}
\end{algorithm}

\begin{remark}[Computational complexity of Algorithm~\ref{alg:ao}]\label{rem:complexity}
The per-outer-iteration cost is dominated by the $\bv$-sweep of step~7, which updates each of $N_C$ susceptances by enumerating $L$ candidates. We cache three quantities across the sweep: the resolvent $\Mm$, the analog beamformer $\Phi(\bv)$, and the effective channel $\Gm\triangleq\Hm^\HH\Phi(\bv)\in\mathbb{C}^{K\times K}$. For each coordinate $i$, we set up once:
\begin{enumerate}
    \item $\Mm\vv^{(i)}$ at $\mathcal{O}(N+K)$, since $\vv^{(i)}$ has at most two nonzeros and $\vv^{(i)\TT}\Mm=(\Mm\vv^{(i)})^\TT$ by complex symmetry. From this, extract the $N$-vector $\uv\triangleq(\Mm\vv^{(i)})_{K+1:K+N}$, the $K$-vector $\wv\triangleq(\vv^{(i)\TT}\Mm)_{1:K}^\TT$, and the scalar $\vv^{(i)\TT}\Mm\vv^{(i)}$.
    \item $\Hm^\HH\uv\in\mathbb{C}^{K}$ at $\mathcal{O}(NK)$.
\end{enumerate}
Each of the $L$ candidates $b$ then costs:
\begin{enumerate}\setcounter{enumi}{2}
    \item Effective-channel rank-one update $\Gm-2Y_0\beta_i(\Hm^\HH\uv)\wv^\TT$ from cached $\Gm$ at $\mathcal{O}(K^2)$, with $\beta_i$ evaluated from the cached scalar in step~1.
    \item Surrogate $\mathcal{J}_i(b)$ at $\mathcal{O}(K^2)$ via the row-norm form $\tr(\Omegam\Em\Em^\HH)=\sum_k\omega_k\,\bigl\|[\Em]_{k,:}\bigr\|^2$, which avoids forming the full $\Em\Em^\HH$ at $\mathcal{O}(K^3)$.
\end{enumerate}
After the $L$-point scan commits the best candidate $b_i^\star$, the cached $\Mm$, $\Phi(\bv)$, and $\Gm$ are refreshed by the rank-one updates~\eqref{eq:smw} and~\eqref{eq:Phi_block_update} (and the analogous rank-one update of $\Gm$), at total cost $\mathcal{O}((N+K)^2)$ dominated by the resolvent refresh. Summing the per-coordinate setup $\mathcal{O}((N+K)^2)$ and the $L$-candidate scan $\mathcal{O}(LK^2)$ across the $N_C$ susceptances gives the per-outer-iteration cost $\mathcal{O}\!\bigl(N_C\,[(N+K)^2+LK^2]\bigr)$, replacing the $\mathcal{O}((N+K)^3)$ matrix inversion the naive per-candidate baseline would require. 
\end{remark}

%% ======================================================================

%% ======================================================================
\section{Numerical Results}
\label{sec:sim}

We report Monte Carlo simulations over Rician \ac{MU-MISO} channels given by
\begin{equation}
\hv_k=\sqrt{\frac{\kappa}{\kappa+1}}\,\mathbf{a}(\theta_k) + \sqrt{\frac{1}{\kappa+1}}\,\mathbf{g}_k,
\end{equation}
where $\kappa=10^{0.6}\approx 3.98$ (corresponding to K-factor at $6$~dB), $\mathbf{a}(\theta)\in\mathbb{C}^N$ is the uniform-linear-array steering vector with $[\mathbf{a}(\theta)]_n=e^{j\pi(n-1)\sin\theta}$ for $n=1,\ldots,N$ (corresponding to antenna spacing equal to half the carrier wavelength), $\theta_k\sim\mathrm{Unif}(-\pi/2,\pi/2)$ is i.i.d.\ across users, and $\mathbf{g}_k\sim\mathcal{CN}(\Zerom,\Id_N)$ is the \ac{NLOS} component. Unless otherwise specified, \ac{SNR}~$\triangleq P_T/\sigma^2$, the antenna count is $N=64$, the user/RF-chain count is $K=4$, $\mathrm{SNR}=10$~dB, the resolution is $q=3$ bits, the reference admittance is $Y_0=(50\,\Omega)^{-1}=0.02$~S, and the dynamic range is $B=7$~mS. Every reported value is averaged over $50$ independent channel realizations. All schemes, including the \ac{PS} baselines, use the same number of \ac{RF} chains, $N_{\mathrm{RF}}=K$, ensuring a like-for-like comparison.
\subsection{Ideal-Case Validation}
\label{sec:sim_ideal}

We compare the following schemes:
\begin{itemize}
    \item \emph{Digital:} Solving \eqref{eq:Pdig} via \ac{WMMSE} \cite{Shi2011WMMSE} (upper bound).
    \item \emph{FC-MiLAC:} Solving \eqref{eq:Pfull} via \cite[Alg.~2]{Wu2026MiLAC}.
    \item \emph{SC-MiLAC:} Solving \eqref{eq:Pstem} via the proposed Algorithm~\ref{alg:stem}.
    \item \emph{FC-PS-Hybrid:} Hybrid digital-\ac{FC}-\ac{PS}  beamforming with $K$ \ac{RF} chains, $NK$ unit-modulus \acp{PS} as the analog beamformer $\Fm_{\mathrm{RF}}=e^{j\angle\Wm^\star_{\mathrm{dig}}}/\sqrt{N}\in\mathbb{C}^{N\times K}$ obtained by column-normalised phase extraction from the digital solution $\Wm^\star_{\mathrm{dig}}$, and a digital beamformer $\Fm_{\mathrm{BB}}\in\mathbb{C}^{K\times K}$ given by the digital \ac{WMMSE} solution on the effective channel $\Hm_{\mathrm{eff}}=\Fm_{\mathrm{RF}}^\HH\Hm$~\cite{Sohrabi2016Hybrid}. The composite $\Wm=\Fm_{\mathrm{RF}}\Fm_{\mathrm{BB}}$ is renormalised to satisfy $\tr(\Wm\Wm^\HH)\le P_T$ when needed. 
    \item \emph{FC-PS:} \ac{FC}-\ac{PS} front end with $K$ \ac{RF} chains and $NK$ unit-modulus \acp{PS} without $K\times K$ digital beamformer. The analog beamformer is $\Fm=e^{j\angle\Wm^\star_{\mathrm{dig}}}/\sqrt{N}$ and the transmit signal is $\Wm=\Fm\diag(\sqrt{\pv})$ with per-stream power $\pv$ obtained by maximizing the sum-rate at fixed $\Fm$, matching the digital-beamforming-free restriction of the \ac{MiLAC}-aided schemes to enforce a like-for-like comparison.
\end{itemize}

We emphasize that \ac{MiLAC} also supports hybrid digital-\ac{MiLAC} variants that \emph{achieve the fully digital sum-rate} (Corollary~\ref{cor:hybrid}). In this paper, we deliberately focus on the \ac{MiLAC}-aided beamforming setting for the operational reasons discussed after Corollary~\ref{cor:hybrid}, namely avoiding symbol-rate baseband multiplication and enabling compatibility with low-resolution \acp{DAC}. Therefore, the comparison with \ac{FC}-\ac{PS}-Hybrid quantifies \emph{how much of the sum-rate achievable by conventional hybrid beamforming can be recovered by \ac{MiLAC}-aided schemes without incurring the symbol-rate baseband multiplication or high-resolution \acp{DAC}}.

\subsubsection{Sum-Rate versus SNR (free regime)}
\begin{figure}[t]
\centering
\begin{tikzpicture}
\begin{axis}[
    width=1\linewidth, height=5.2cm,
    xlabel={\ac{SNR} (dB)}, ylabel={Sum rate (bits/s/Hz)},
    xmin=-5, xmax=25, ymin=0, ymax=65,
    grid=both, grid style={black!12},
    legend style={font=\scriptsize, at={(0.02,0.98)}, anchor=north west, draw=black!30},
    tick label style={font=\scriptsize}, label style={font=\small},
    every axis plot/.append style={line width=0.8pt, mark size=1.6pt}
]
\addplot[blue, mark=*] table[x=snr_db, y=Digital] {data/sumrate_vs_snr_ideal.dat};
\addlegendentry{Digital}
\addplot[purple, mark=o, mark options={fill=white}] table[x=snr_db, y=FCMiLAC] {data/sumrate_vs_snr_ideal.dat};
\addlegendentry{FC-MiLAC}
\addplot[red, mark=square*, densely dashed] table[x=snr_db, y=SCMiLAC] {data/sumrate_vs_snr_ideal.dat};
\addlegendentry{SC-MiLAC}
\addplot[teal!70!black, mark=star] table[x=snr_db, y=FCPSHybrid] {data/sumrate_vs_snr_ideal.dat};
\addlegendentry{FC-PS-Hybrid}
\addplot[orange!85!black, mark=triangle*] table[x=snr_db, y=FCPS] {data/sumrate_vs_snr_ideal.dat};
\addlegendentry{FC-PS}
\end{axis}
\end{tikzpicture}
\vspace{-3mm}
\caption{Sum-rate versus \ac{SNR} in the free regime when $N=64$ and $K=4$.}
\label{fig:snr}
\end{figure}
Fig.~\ref{fig:snr} sweeps \ac{SNR} in the free regime. \ac{SC}-\ac{MiLAC} matches the sum-rate of \ac{FC}-\ac{MiLAC}, validating the free-regime result in Proposition~\ref{prop:phase}. The number of tunable susceptances is reduced from $2346$ to $516$, i.e., a $78\%$ reduction in component count. This shows that, in the large-array regime that motivates \ac{MiLAC}, the \ac{SC} architecture substantially reduces hardware complexity without any measurable sum-rate loss. Both \ac{MiLAC} curves stay within $1.5$~bits/s/Hz of the fully digital upper bound throughout. The \ac{FC}-\ac{PS} baseline saturates near $29$~bits/s/Hz beyond $\ac{SNR}=15$~dB, falling about $18$~bits/s/Hz below \ac{SC}-\ac{MiLAC} at $\ac{SNR}=25$~dB. The conventional hybrid baseline \ac{FC}-\ac{PS}-Hybrid also recovers nearly the fully digital sum-rate at the minimum chain count $K=N_{\mathrm{RF}}$ and closely tracks \ac{FC}-\ac{MiLAC}. The \ac{MiLAC}-aided schemes thus achieve a sum-rate comparable to conventional hybrid beamforming, while avoiding the symbol-rate digital baseband multiplication and enabling compatibility with low-resolution \acp{DAC}, as discussed after Corollary~\ref{cor:hybrid}.

\subsubsection{Sum-Rate versus SNR (binding regime)}
\begin{figure}[t]
\centering
\begin{tikzpicture}
\begin{axis}[
    width=1\linewidth, height=5.2cm,
    xlabel={\ac{SNR} (dB)}, ylabel={Sum rate (bits/s/Hz)},
    xmin=-5, xmax=25, ymin=0, ymax=35,
    grid=both, grid style={black!12},
    legend style={font=\scriptsize, at={(0.02,0.98)}, anchor=north west, draw=black!30},
    tick label style={font=\scriptsize}, label style={font=\small},
    every axis plot/.append style={line width=0.8pt, mark size=1.6pt}
]
\addplot[blue, mark=*] table[x=snr_db, y=Digital] {data/sumrate_vs_snr_edge_ideal.dat};
\addlegendentry{Digital}
\addplot[purple, mark=o, mark options={fill=white}] table[x=snr_db, y=FCMiLAC] {data/sumrate_vs_snr_edge_ideal.dat};
\addlegendentry{FC-MiLAC}
\addplot[red, mark=square*, densely dashed] table[x=snr_db, y=SCMiLAC] {data/sumrate_vs_snr_edge_ideal.dat};
\addlegendentry{SC-MiLAC}
\addplot[teal!70!black, mark=star] table[x=snr_db, y=FCPSHybrid] {data/sumrate_vs_snr_edge_ideal.dat};
\addlegendentry{FC-PS-Hybrid}
\addplot[orange!85!black, mark=triangle*] table[x=snr_db, y=FCPS] {data/sumrate_vs_snr_edge_ideal.dat};
\addlegendentry{FC-PS}
\end{axis}
\end{tikzpicture}
\vspace{-3mm}
\caption{Sum-rate versus \ac{SNR} in the binding regime when $N=6$ and $K=4$.}
\label{fig:snr_edge}
\end{figure}

Fig.~\ref{fig:snr_edge} probes the binding regime $K\le N\le 2K-2$ at $(N,K)=(6,4)$, the only regime in which the \ac{SC}-\ac{MiLAC}-aided sum-rate may fall strictly below the \ac{FC} benchmark (Remark~\ref{rem:binding}). \ac{SC}-\ac{MiLAC} and \ac{FC}-\ac{MiLAC} coincide at low \ac{SNR}, then diverge from mid-\ac{SNR} onward, with \ac{FC}-\ac{MiLAC} pulling ahead by $5.3$~bits/s/Hz at $\ac{SNR}=25$~dB. This reflects the rank-rigidity constraint of Remark~\ref{rem:rank}, which pins $\ell=2K{-}N=2$ singular values of $\Ym$ to unity and suppresses the interference-nulling freedom driving the high-\ac{SNR} slope. \emph{The gap is due to the Stiefel restriction, and whether the full $\mathcal{F}_{\mathrm{stem}}$ closes it remains open} (Remark~\ref{rem:stiefel_suff}). We flag this gap explicitly so as not to overstate the architectural claim, but the binding regime requires $K$ comparable to $N$ and is atypical for the large-array regime that motivates \ac{MiLAC}. As in the large-array case, \ac{FC}-\ac{PS}-Hybrid stays close to the digital upper bound, here exceeding \ac{SC}-\ac{MiLAC}, while \ac{FC}-\ac{PS} without a baseband digital beamformer saturates at high \ac{SNR}.

\subsubsection{Scaling in Antenna Count}
\begin{figure}[t]
\centering
\begin{tikzpicture}
\begin{axis}[
    width=1\linewidth, height=5.2cm,
    xlabel={Number of antennas $N$}, ylabel={Sum rate (bits/s/Hz)},
    xmode=log, log basis x=2, xmin=4, xmax=256, ymin=6, ymax=45,
    xtick={4,6,8,16,32,64,128,256}, xticklabels={4,6,8,16,32,64,128,256},
    grid=both, grid style={black!12},
    legend style={font=\scriptsize, at={(0.02,0.98)}, anchor=north west, draw=black!30},
    tick label style={font=\scriptsize}, label style={font=\small},
    every axis plot/.append style={line width=0.8pt, mark size=1.6pt}
]
\addplot[blue, mark=*] table[x=N, y=Digital] {data/sumrate_vs_N_ideal.dat};
\addlegendentry{Digital}
\addplot[purple, mark=o, mark options={fill=white}] table[x=N, y=FCMiLAC] {data/sumrate_vs_N_ideal.dat};
\addlegendentry{FC-MiLAC}
\addplot[red, mark=square*, densely dashed] table[x=N, y=SCMiLAC] {data/sumrate_vs_N_ideal.dat};
\addlegendentry{SC-MiLAC}
\addplot[teal!70!black, mark=star] table[x=N, y=FCPSHybrid] {data/sumrate_vs_N_ideal.dat};
\addlegendentry{FC-PS-Hybrid}
\addplot[orange!85!black, mark=triangle*] table[x=N, y=FCPS] {data/sumrate_vs_N_ideal.dat};
\addlegendentry{FC-PS}
\end{axis}
\end{tikzpicture}
\vspace{-3mm}
\caption{Sum-rate versus antenna count $N$ when $K=4$ and $\mathrm{SNR}=10$~dB.}
\label{fig:N}
\end{figure}

Fig.~\ref{fig:N} sweeps $N$ from small to large arrays. \ac{SC}-\ac{MiLAC} and \ac{FC}-\ac{MiLAC} achieve almost the same performance when $N\ge 2K-1=7$, while a small gap appears for $N<7$. The phase-diagram boundary $N=2K-1$ thus partitions the array-size axis into a free regime that covers every practical large array and a small-array binding regime where the rank-rigidity gap is bounded yet quickly vanishing. As $N$ grows, the gap between the \ac{MiLAC} schemes and \ac{FC}-\ac{PS}-Hybrid narrows until they essentially coincide in large arrays, while the gain of all three over the baseband-free \ac{FC}-\ac{PS} widens, since the constant-modulus constraint of \ac{FC}-\ac{PS} becomes more restrictive at large $N$, in constrast to those employing \ac{MiLAC} front end or with digital baseband.

\subsection{Hardware-Compliant Extension}
\label{sec:sim_hw}

We compare the following schemes compatible to hardware-compliant case with bounded-and-discrete susceptance:
\begin{itemize}
    \item \emph{Unc:} The unconstrained ideal-case \ac{SC}-\ac{MiLAC} sum-rate of Algorithm~\ref{alg:stem}, serving as the upper-bound reference.
    \item \emph{SC-MiLAC-PHP:} Algorithm~\ref{alg:proj} for \ac{SC}-\ac{MiLAC}, on $N_C^{\mathrm{stem}}=K(2N{+}1)$ tunable entries.
    \item \emph{SC-MiLAC-AR:} Algorithm~\ref{alg:ao} for \ac{SC}-\ac{MiLAC}, warm-started by SC-MiLAC-PHP.
    \item \emph{FC-MiLAC-PHP:} \ac{FC} counterpart of SC-MiLAC-PHP, on $N_C^{\mathrm{full}}=(N{+}K)(N{+}K{+}1)/2$ tunable entries.
    \item \emph{FC-MiLAC-AR:} \ac{FC} counterpart of SC-MiLAC-AR.
\end{itemize}

\subsubsection{Sum-Rate versus SNR}
\begin{figure}[t]
\centering
\begin{tikzpicture}
\begin{axis}[
    width=1\linewidth, height=5.2cm,
    xlabel={\ac{SNR} (dB)}, ylabel={Sum rate (bits/s/Hz)},
    xmin=-5, xmax=25, ymin=0, ymax=52,
    grid=both, grid style={black!12},
    legend style={font=\scriptsize, at={(0.345,0.648)}, anchor=south,
                  draw=black!30, fill=white, fill opacity=0.95,
                  text opacity=1},
    legend cell align={left}, legend columns=2,
    tick label style={font=\scriptsize}, label style={font=\small},
    every axis plot/.append style={line width=0.8pt, mark size=1.6pt}
]
\addplot[blue, mark=*] table[x=snr_db, y=Digital] {data/sumrate_vs_snr_hw.dat};
\addlegendentry{Digital}
\addplot[purple, no marks, dashed] table[x=snr_db, y=SCUnc] {data/sumrate_vs_snr_hw.dat};
\addlegendentry{Unc}
\addplot[red, mark=triangle*] table[x=snr_db, y=SCAR] {data/sumrate_vs_snr_hw.dat};
\addlegendentry{SC-MiLAC-AR}
\addplot[orange!85!black, mark=diamond*, densely dashed] table[x=snr_db, y=SCPHP] {data/sumrate_vs_snr_hw.dat};
\addlegendentry{SC-MiLAC-PHP}
\addplot[teal, mark=square*] table[x=snr_db, y=FCAR] {data/sumrate_vs_snr_hw.dat};
\addlegendentry{FC-MiLAC-AR}
\addplot[brown, mark=o, mark options={fill=white}, densely dashed] table[x=snr_db, y=FCPHP] {data/sumrate_vs_snr_hw.dat};
\addlegendentry{FC-MiLAC-PHP}
\end{axis}
\end{tikzpicture}
\vspace{-3mm}
\caption{Sum-rate versus \ac{SNR} when $N=64$, $K=4$, $q=3$, and $B=7$~mS.}
\label{fig:sr_snr_hw}
\end{figure}

Fig.~\ref{fig:sr_snr_hw} traces both \ac{AR} variants between the Unc upper bound and the \ac{PHP} baselines. At $\ac{SNR}=10$~dB, \ac{SC}-\ac{MiLAC}-\ac{AR} delivers $21.05$~bits/s/Hz ($75\%$ of Unc) and \ac{FC}-\ac{MiLAC}-\ac{AR} delivers $24.38$~bits/s/Hz ($87\%$ of Unc), with these recoveries dropping to $70\%$ and $79\%$ at $\ac{SNR}=25$~dB. The \ac{AR} slope flattens at high \ac{SNR} because the finite-grid susceptance constraint prevents $\Phi(\bv^{\mathrm{AR}})$ from perfectly matching the ideal beamformer $\Fm^\star$, leaving residual interference that becomes the dominant performance limitation as \ac{SNR} grows. Both \ac{PHP} variants saturate well below the \ac{AR} variants throughout, with \ac{SC}-\ac{MiLAC}-\ac{PHP} capping near $6.5$~bits/s/Hz and \ac{FC}-\ac{MiLAC}-\ac{PHP} near $8.8$~bits/s/Hz: \ac{PHP} simply rounds the unconstrained solution to the nearest grid point and can therefore suffer significant quantization loss, whereas \ac{AR} further refines the quantized susceptances by directly optimizing the sum-rate over the hardware grid. Unlike the ideal continuous-susceptance case, a noticeable gap between \ac{FC}-\ac{MiLAC} and \ac{SC}-\ac{MiLAC} appears under the discrete grid: \ac{FC}-\ac{MiLAC} has more tunable interconnections, and hence more degrees of freedom to compensate the quantization-induced mismatch, whereas \ac{SC}-\ac{MiLAC} uses far fewer components. The \ac{FC}-\ac{MiLAC}-\ac{AR} advantage nonetheless stays within $4.3$~bits/s/Hz at a $4.5{\times}$ larger component count, so \ac{SC}-\ac{MiLAC} offers a favorable trade-off between hardware complexity and sum-rate.

\subsubsection{Sum-Rate versus Dynamic Range}
\begin{figure}[t]
\centering
\begin{tikzpicture}
\begin{axis}[
    width=1\linewidth, height=5.2cm,
    xmode=log, log basis x={10},
    xlabel={Dynamic range $B$ (mS)}, ylabel={Sum rate (bits/s/Hz)},
    xmin=0.2, xmax=2000, ymin=0, ymax=30,
    grid=both, grid style={black!12},
    legend style={font=\scriptsize, at={(0.98,0.68)}, anchor=north east,
                  draw=black!30, fill=white, fill opacity=0.95, text opacity=1},
    legend cell align={left}, legend columns=2,
    tick label style={font=\scriptsize}, label style={font=\small},
    every axis plot/.append style={line width=0.8pt, mark size=1.6pt}
]
\addplot[purple, no marks, dashed] table[x expr=\thisrow{B}*1000, y=SCUnc] {data/sumrate_vs_B.dat};
\addlegendentry{Unc}
\addplot[red, mark=triangle*] table[x expr=\thisrow{B}*1000, y=SCAR] {data/sumrate_vs_B.dat};
\addlegendentry{SC-MiLAC-AR}
\addplot[teal, mark=square*] table[x expr=\thisrow{B}*1000, y=FCAR] {data/sumrate_vs_B.dat};
\addlegendentry{FC-MiLAC-AR}
\addplot[orange!85!black, mark=diamond*, densely dashed] table[x expr=\thisrow{B}*1000, y=SCPHP] {data/sumrate_vs_B.dat};
\addlegendentry{SC-MiLAC-PHP}
\addplot[brown, mark=o, mark options={fill=white}, densely dashed] table[x expr=\thisrow{B}*1000, y=FCPHP] {data/sumrate_vs_B.dat};
\addlegendentry{FC-MiLAC-PHP}
\end{axis}
\end{tikzpicture}
\vspace{-3mm}
\caption{Sum-rate versus dynamic range $B$ when $N=64$, $K=4$, $\mathrm{SNR}=10$~dB, and $q=3$.}
\label{fig:sr_B}
\end{figure}

Fig.~\ref{fig:sr_B} sweeps the physical susceptance range $B$ over four orders of magnitude. The \ac{AR} curves first rise and then fall as $B$ increases, a behavior set by two competing effects of $B$ at fixed $L$. At small $B$ the box $[-B,B]$ clips the canonical $\bv^\star=\Psi(\Fm^\star)$ severely, restricting the feasible set $\mathcal{B}_{B,L}^{N_C}$ to configurations far from rate-optimal even after \ac{AR}'s rate-aware sweep, so at $B=0.2$~mS both architectures stall below $7$~bits/s/Hz. At large $B$ the grid step $\Delta_{B,L}=2B/(L-1)$ scales linearly with $B$, and the grid becomes too coarse to approximate $\bv^\star$. \ac{PHP} curves sit uniformly below the \ac{AR} curves throughout, since \ac{PHP} rounds to the nearest grid point without testing the resulting rate, whereas \ac{AR} evaluates each $b_i$ against the current $\mathcal{J}_i(b)$ in~\eqref{eq:section}.

\subsubsection{Sum-Rate versus Resolution}
\begin{figure}[t]
\centering
\begin{tikzpicture}
\begin{axis}[
    width=1\linewidth, height=5.2cm,
    xlabel={Resolution bits $q$}, ylabel={Sum rate (bits/s/Hz)},
    xmin=1.0, xmax=6.0, ymin=0, ymax=30,
    xtick={1,2,3,4,5,6},
    grid=both, grid style={black!12},
    legend style={font=\scriptsize, at={(1.00,0.35)}, anchor=south east,
                  draw=black!30, fill=white, fill opacity=0.95, text opacity=1},
    legend cell align={left}, legend columns=2,
    tick label style={font=\scriptsize}, label style={font=\small},
    every axis plot/.append style={line width=0.8pt, mark size=1.6pt}
]
\addplot[purple, no marks, dashed] table[x=q, y=SCUnc] {data/sumrate_vs_q.dat};
\addlegendentry{Unc}
\addplot[red, mark=triangle*] table[x=q, y=SCAR] {data/sumrate_vs_q.dat};
\addlegendentry{SC-MiLAC-AR}
\addplot[teal, mark=square*] table[x=q, y=FCAR] {data/sumrate_vs_q.dat};
\addlegendentry{FC-MiLAC-AR}
\addplot[orange!85!black, mark=diamond*, densely dashed] table[x=q, y=SCPHP] {data/sumrate_vs_q.dat};
\addlegendentry{SC-MiLAC-PHP}
\addplot[brown, mark=o, mark options={fill=white}, densely dashed] table[x=q, y=FCPHP] {data/sumrate_vs_q.dat};
\addlegendentry{FC-MiLAC-PHP}
\end{axis}
\end{tikzpicture}
\vspace{-3mm}
\caption{Sum-rate versus resolution bits $q$ when $N=64$, $K=4$, $\mathrm{SNR}=10$~dB, and $B=7$~mS.}
\label{fig:sr_q}
\end{figure}

Fig.~\ref{fig:sr_q} sweeps $q$.\footnote{Note that the reported gaps reflect not only the architectures but also the algorithms: \ac{AR} converges to a stationary point of a non-convex problem, so part of the residual loss, and of the \ac{FC}-over-\ac{SC} advantage, may stem from sub-optimal local solutions rather than from fundamental limits, and \ac{PHP} is a deliberately simple baseline. The hardware-case results should therefore be read as the performance achievable by the proposed methods, not as the ultimate limits of the discrete-susceptance architectures, which sharper algorithms or initializations could approach more closely.} By $q=3$, the \ac{AR} variants close most of the gap to Unc: SC-MiLAC-AR reaches $21.05$~bits/s/Hz ($75\%$ of Unc $=27.92$) and FC-MiLAC-AR reaches $24.38$~bits/s/Hz ($87\%$ of Unc). Increasing $q$ from $3$ to $6$ shrinks the grid step $\Delta_{B,L}=2B/(L-1)$ by a factor of $9$ but brings only marginal additional gain, reflecting the diminishing return of finer quantization once the \ac{AR} sweep already lands in a near-optimal grid. The plateau beyond $q=3$ justifies the default $q=3$. \ac{PHP} caps below $9$~bits/s/Hz across all $q$, since it cannot exploit the finer grid.

%% ======================================================================
\section{Conclusion}
\label{sec:conc}

We presented a unified beamforming framework for \ac{MU-MISO} systems aided by a \ac{SC}-\ac{MiLAC}, covering both freely tunable susceptances and the bounded, discrete ones of practical hardware. A \ac{SC}-\ac{MiLAC} can realize any beamformer on the complex Stiefel manifold, and our phase diagram shows that restricting the design to this set costs no sum-rate against the \ac{FC}-\ac{MiLAC} once the array is large ($N\ge 2K-1$), leaving only a moderate gap in the small-array regime of limited practical interest. We cast sum-rate maximization as a \ac{WMMSE} problem and updated the beamformer by Riemannian optimization on the Stiefel manifold, adding for the discrete hardware a per-element refinement, made efficient by a rank-one Sherman--Morrison update, that raises the sum-rate monotonically. Monte Carlo experiments bear this out: the \ac{SC}-\ac{MiLAC} matches the \ac{FC}-\ac{MiLAC} with an order of magnitude fewer tunable components, comes within a small gap of the fully digital sum-rate without the symbol-rate baseband that hybrid beamforming needs, and keeps most of this advantage on the discrete grid, where the residual gap to \ac{FC} is a modest price for the hardware saving.

Several directions extend this work. Whether the full \ac{SC}-realizable set $\mathcal{F}_{\mathrm{stem}}$ closes the binding-regime gap remains open. Channel estimation under the hardware grid along the lines of~\cite{ZhangEstimation}, extension to coupled antenna arrays via the mutual-coupling framework of~\cite{NeriniPhysics}, extension to wideband systems \cite{Peng2026OFDM}, and experimental validation on the practical hardware are natural next steps~\cite{NeriniHardware}.

\bibliography{references}
\bibliographystyle{IEEEtran}

\end{document}